
\documentclass[lettersize,journal]{IEEEtran}

\usepackage{textcomp}
\usepackage{transparent}
\usepackage{stfloats}
\usepackage{url}
\usepackage{verbatim}
\usepackage{graphicx}
\def\BibTeX{{\rm B\kern-.05em{\sc i\kern-.025em b}\kern-.08em
    T\kern-.1667em\lower.7ex\hbox{E}\kern-.125emX}}
\usepackage{balance}
\renewcommand{\transparent}[1]{}

\usepackage{epsfig} 




\usepackage{graphicx} 
\usepackage{epsfig} 
\usepackage{amsmath} 
\usepackage{amssymb}  
\usepackage{subfig}
\usepackage[ruled,longend]{algorithm2e}
\usepackage{multicol}
\usepackage{color}
\usepackage[colorlinks,linkcolor=blue]{hyperref}
\usepackage{threeparttable}
\usepackage[export]{adjustbox}
\usepackage{graphicx} 
\usepackage{caption,subcaption}
\usepackage{amsthm}
\usepackage{import}

\newtheorem{theorem}{Theorem}

\newtheorem{lemma}{Lemma}
\newtheorem{remark}{Remark}

\newtheorem{assumption}{Assumption}

\newcommand{\T}{\ensuremath{^\mathsf{T}}}
\usepackage{scalerel}

\newcommand\reallywidehat[1]{\arraycolsep=0pt\relax%
\begin{array}{c}
\stretchto{
  \scaleto{
    \scalerel*[\widthof{\ensuremath{#1}}]{\kern-.5pt\bigwedge\kern-.5pt}
    {\rule[-\textheight/2]{1ex}{\textheight}} 
  }{\textheight} %
}{0.5ex}\\           
#1\\                 
\rule{-1ex}{0ex}
\end{array}
}

\begin{document}

\title{A Reduced Order Iterative Linear Quadratic Regulator (ILQR) Technique for the Optimal Control of Nonlinear Partial Differential Equations}

\author{Aayushman Sharma, Suman Chakravorty
\thanks{The authors are with the Department of Aerospace Engineering, Texas A\&M University, College Station, TX 77843 USA (e-mail: aayushmansharma@tamu.edu; schakrav@tamu.edu)
}}

\IEEEaftertitletext{\vspace{-2\baselineskip}}

\maketitle

\begin{abstract}
In this paper, we introduce a reduced order model-based reinforcement learning (MBRL) approach, utilizing the Iterative Linear Quadratic Regulator (ILQR) algorithm 
for the optimal control of nonlinear partial differential equations (PDEs). The approach proposes a novel modification of the ILQR technique: it uses the Method of Snapshots to identify a reduced order Linear Time Varying (LTV) approximation of the nonlinear PDE dynamics around a current estimate of the optimal trajectory, utilizes the identified LTV model to solve a time-varying reduced order LQR problem to obtain an improved estimate of the optimal trajectory along with a new reduced basis, and iterates till convergence. The convergence behavior of the reduced order approach is analyzed and the algorithm is shown to converge to a limit set that is dependent on the truncation error in the reduction. The proposed approach is tested on the viscous Burger's equation and two phase-field models for microstructure evolution in materials, and the results show that there is a significant reduction in the computational burden over the standard ILQR approach, without significantly sacrificing performance.

\end{abstract} 

\section{Introduction}

The discipline corresponding to the modeling and control of infinite-dimensional partial differential equations (PDEs), has been one of significant interest with a rich history, spanning applications such as fluid flows \cite{lellouche}, protein folding and enzyme kinematics \cite{proteinfold}, microstructure dynamics \cite{Boe2002}\cite{Rui2011} \emph{etc}. To address such systems, numerical methods are generally utilized, wherein the state-space is discretized through finite-element methods, employing control strategies for the resulting complex, very high-dimensional differential equation systems. The need for reducing the complexity of these systems has driven the development of Model Order Reduction (MOR) techniques. Recent advances in the field have seen the introduction of data-driven approaches in MOR, with methods such as DMD \cite{dmd1}\cite{dmd2}, neural networks \cite{nn1}\cite{nn2}\cite{autoenc}, nonlinear auto-regressive algorithms\cite{nara1}, and sparse regression\cite{sparse1}. Among these approaches, the Proper Orthogonal Decomposition (POD) approach is especially prevelant, using simulation-derived snapshots to compute the reduced basis in order to approximate the original nonlinear PDE \cite{sirovich1987turbulence}.


Recent studies \cite{wang2021search}\cite{yu2019decoupled} have introduced a data-based approach for the optimal control of nonlinear systems, employing a successive identification of linear time-varying (LTV) models in conjunction with the Iterative Linear Quadratic Regulator (ILQR) technique to compute a globally optimum local feedback control policy. The ILQR algorithm \cite{li2004iterative} belongs to the class of sequential quadratic programming (SQP) methods, widely used for solving problems in nonlinear programming in the optimal control context, with theoretical guarantees for convergence to the global minimum of the optimal control problem under relatively mild conditions \cite{wang2021search}. Its versatility has been successfully demonstrated in various relatively low-dimensional robotic applications \cite{ilqr2}\cite{ilqr3}. However, the Curse of Dimensionality presents significant obstacles in controlling complex nonlinear systems. The existing extensions fail to address the limitations of the ILQR framework in complex, high-dimensional settings, especially for PDEs, which require the identification of very high-dimensional LTV models. This leads to the critical backward pass Ricatti recursion in the algorithm being intractable, thus rendering the algorithm in its current form infeasible. Consequently, the optimal control of nonlinear PDEs largely remains an unresolved issue.


\begin{figure}
    \centering
\begin{multicols}{4}
      \subfloat[Initial State]{\includegraphics[width=\linewidth]{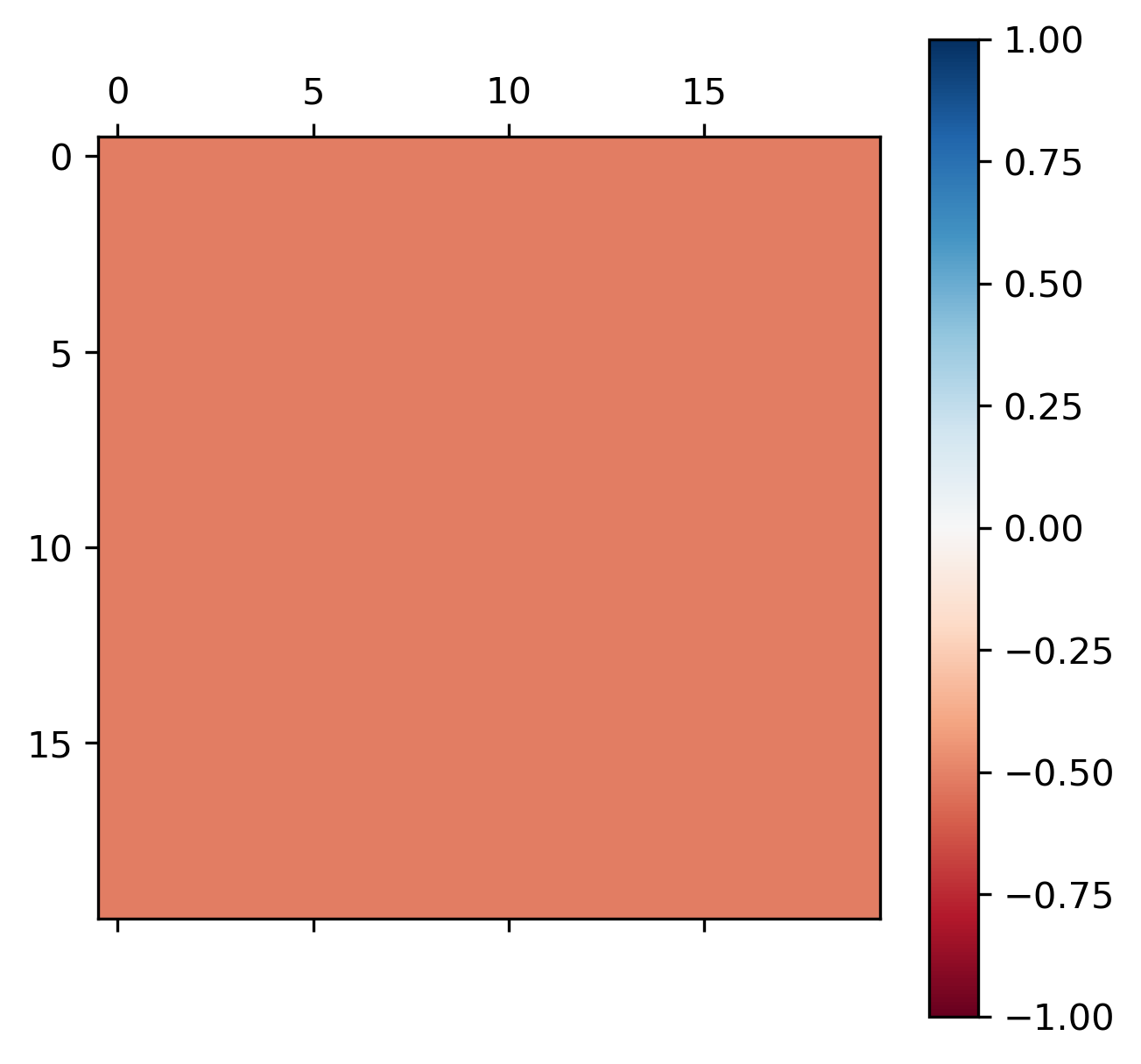}}
      \subfloat[Int. I]{\includegraphics[width=\linewidth]{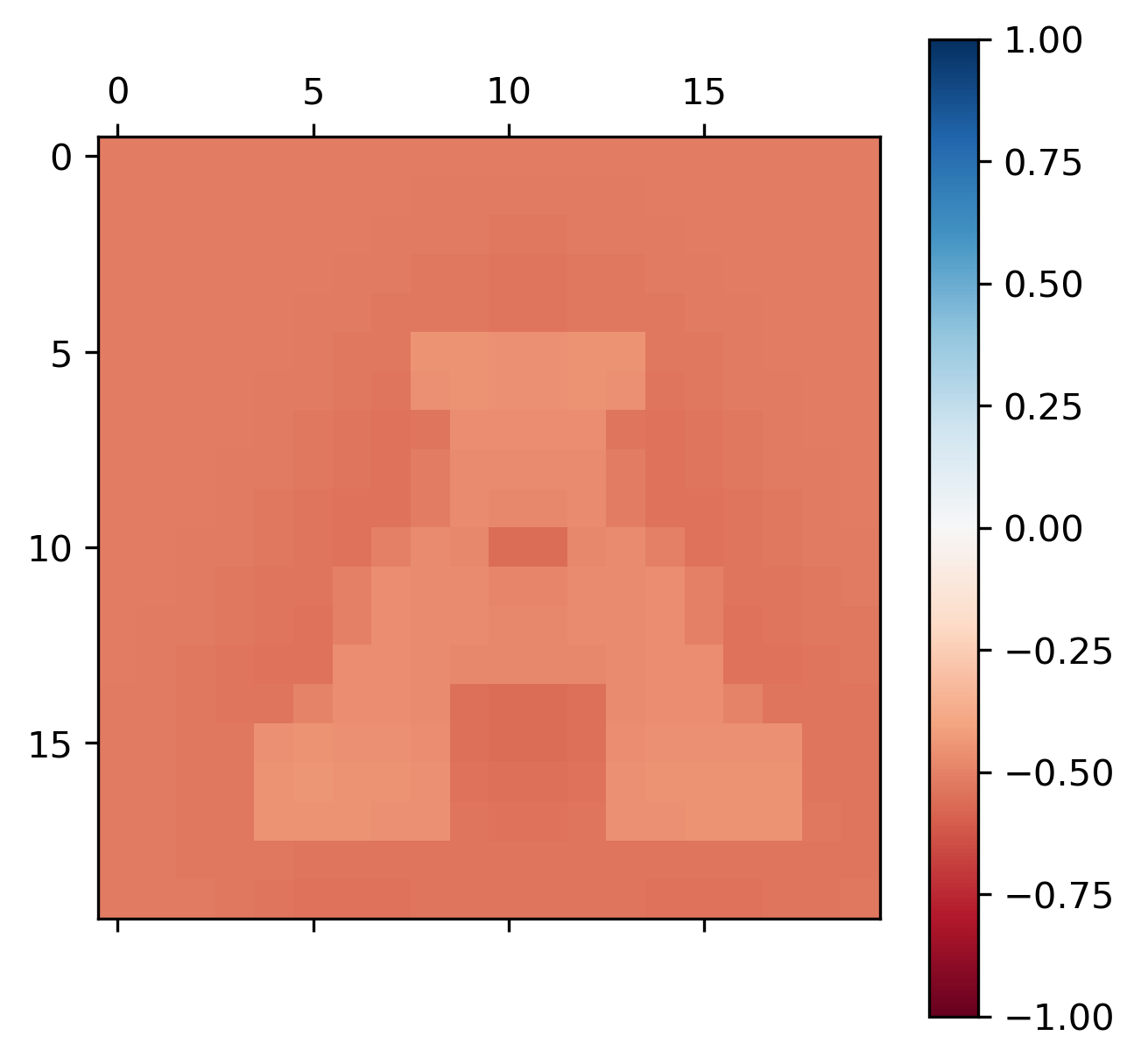}}
      \subfloat[Int. II]{\includegraphics[width=\linewidth]{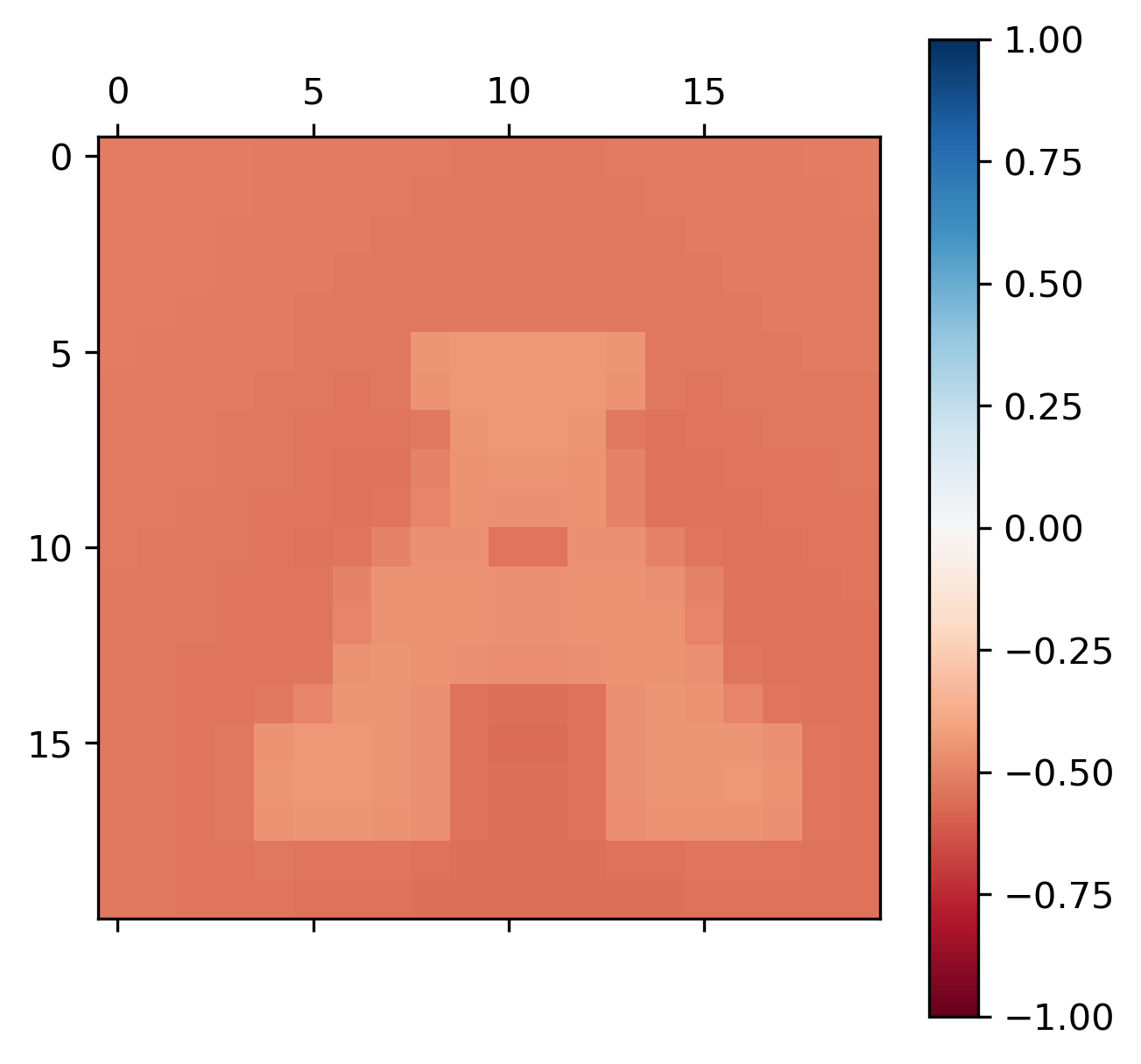}}
      \subfloat[Final State]{\includegraphics[width=\linewidth]{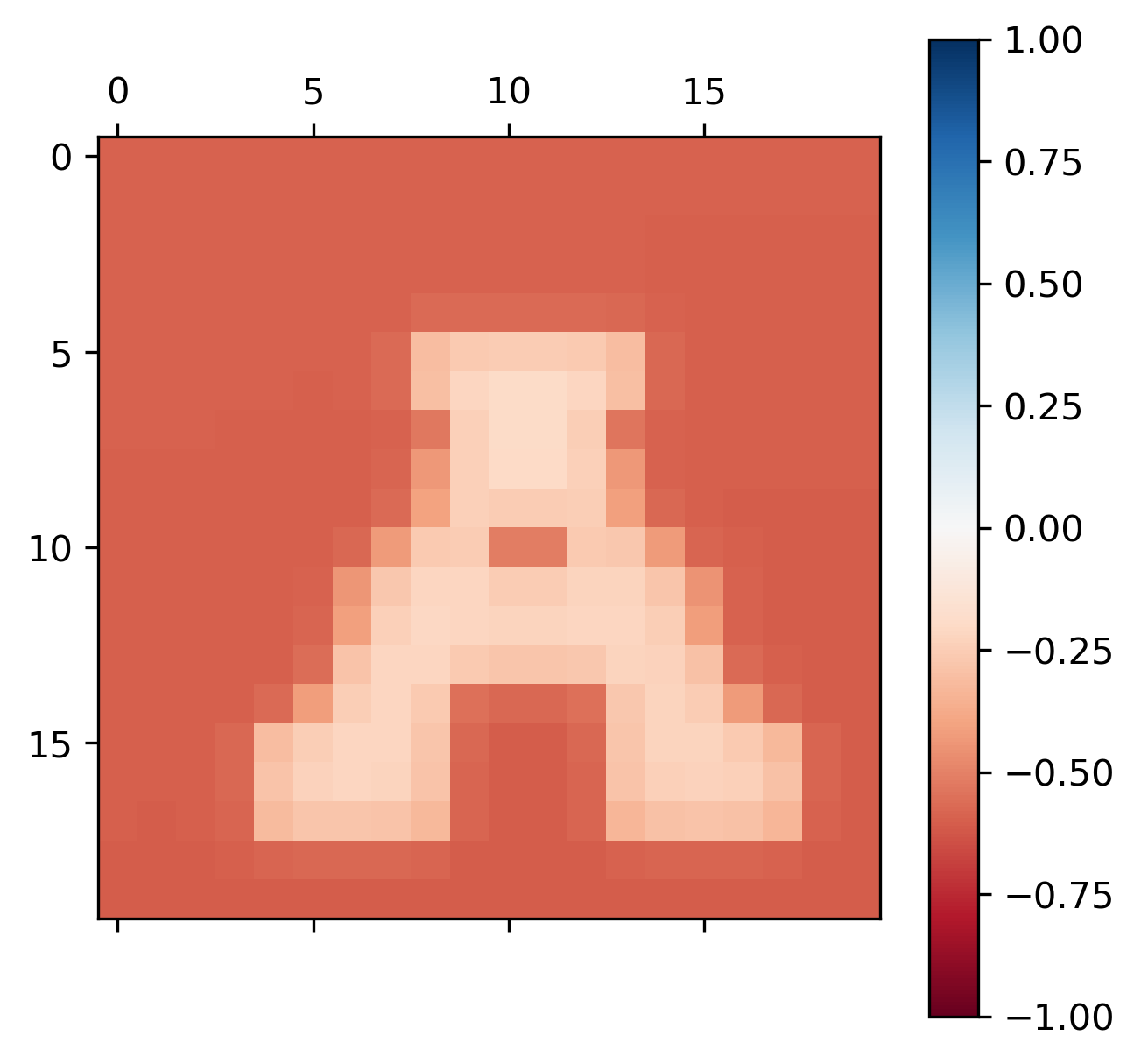}}

      
\end{multicols}
\begin{multicols}{4}

      \subfloat[1st mode]{\includegraphics[width=\linewidth]{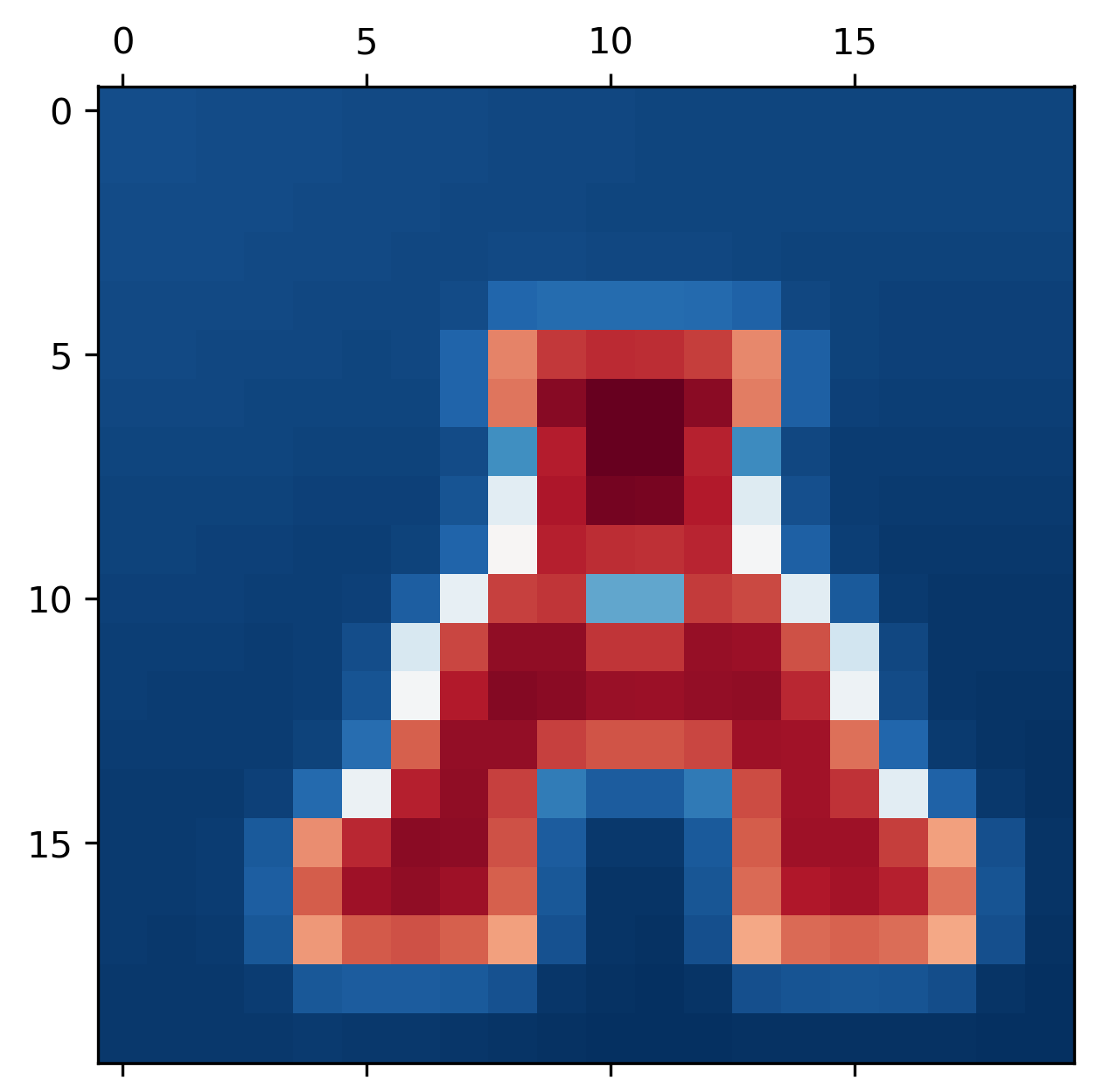}}
      \subfloat[2nd mode]{\includegraphics[width=\linewidth]{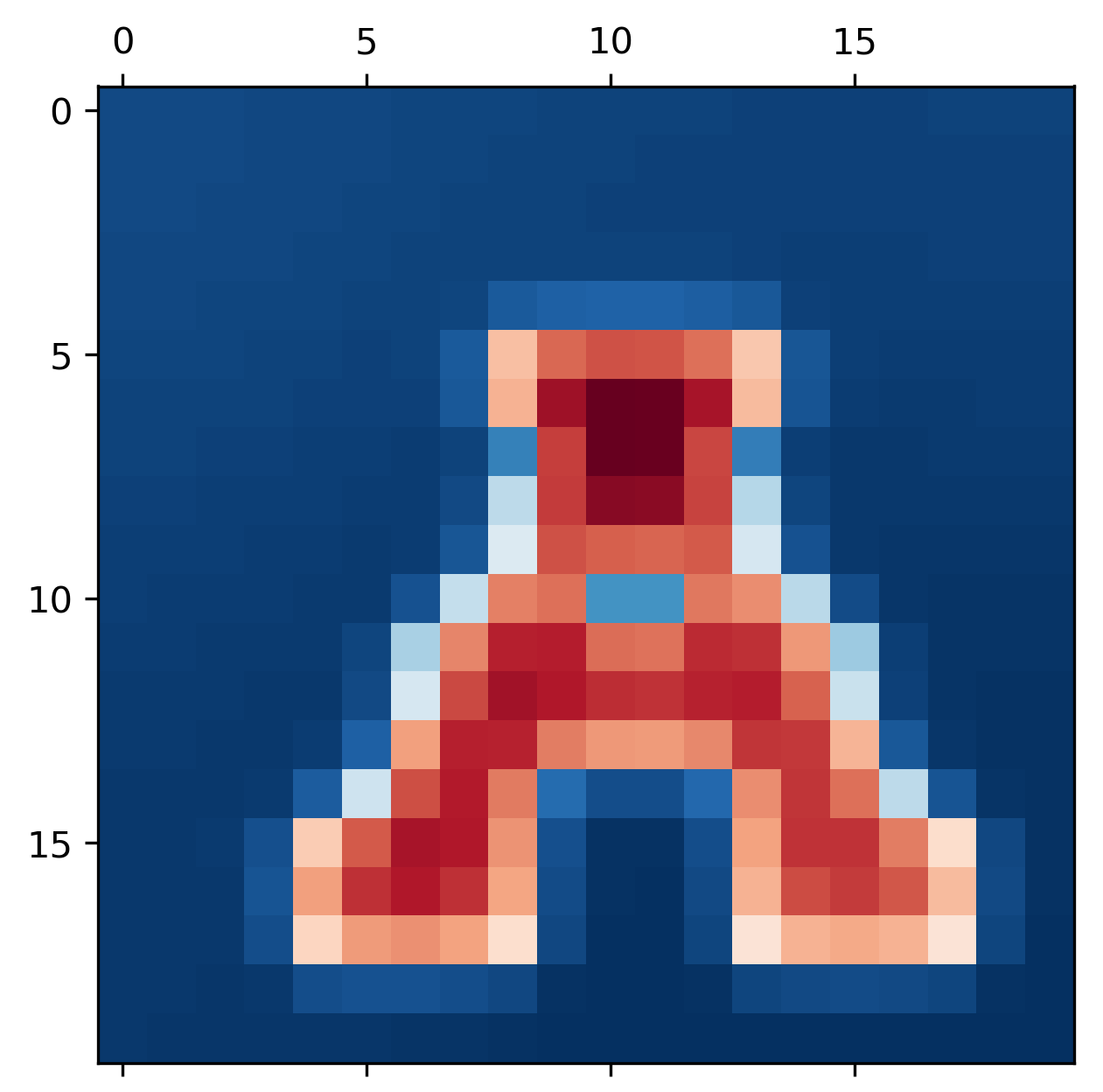}}
      \subfloat[3rd mode]{\includegraphics[width=\linewidth]{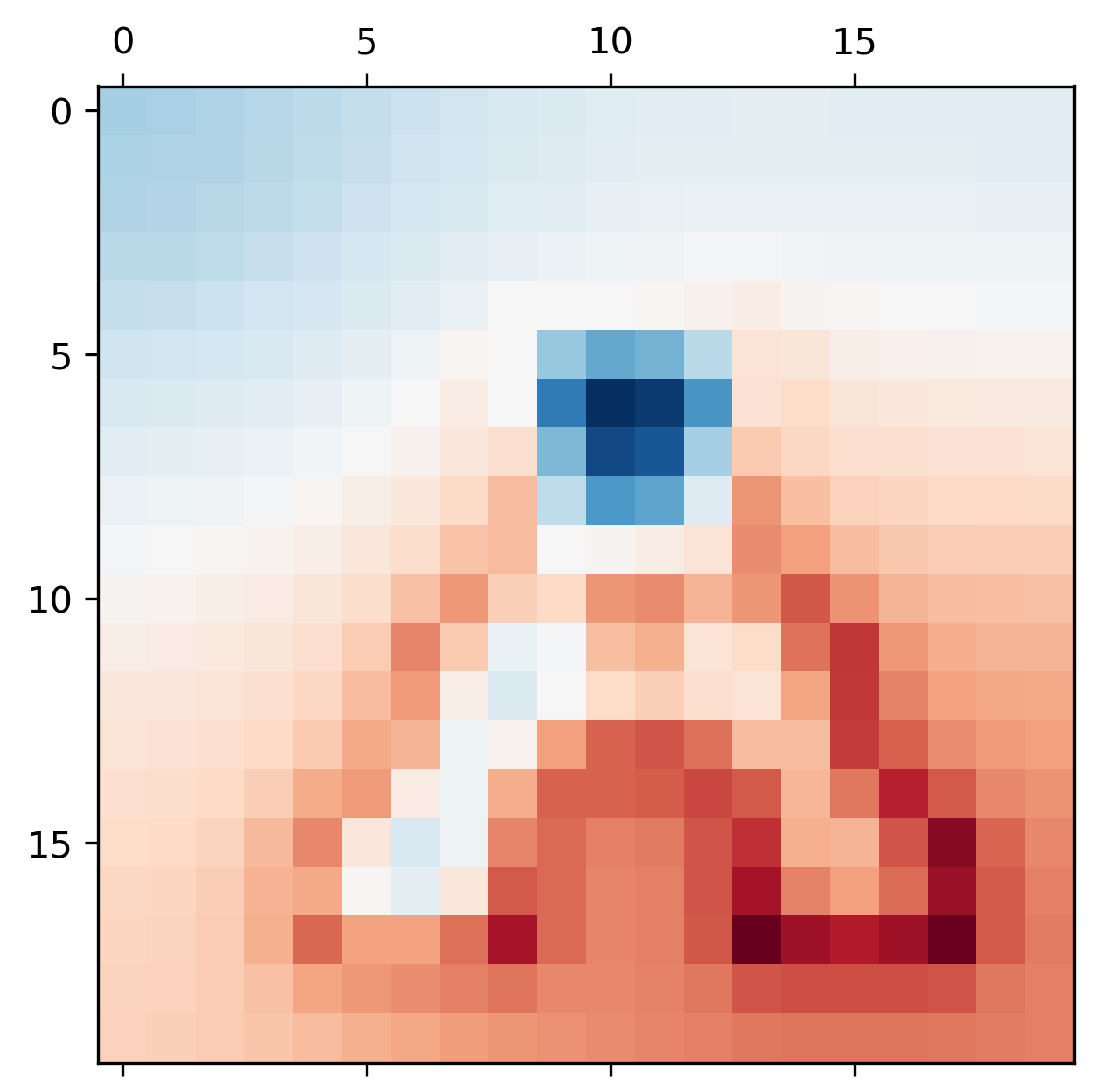}}
      \subfloat[4th mode]{\includegraphics[width=\linewidth]{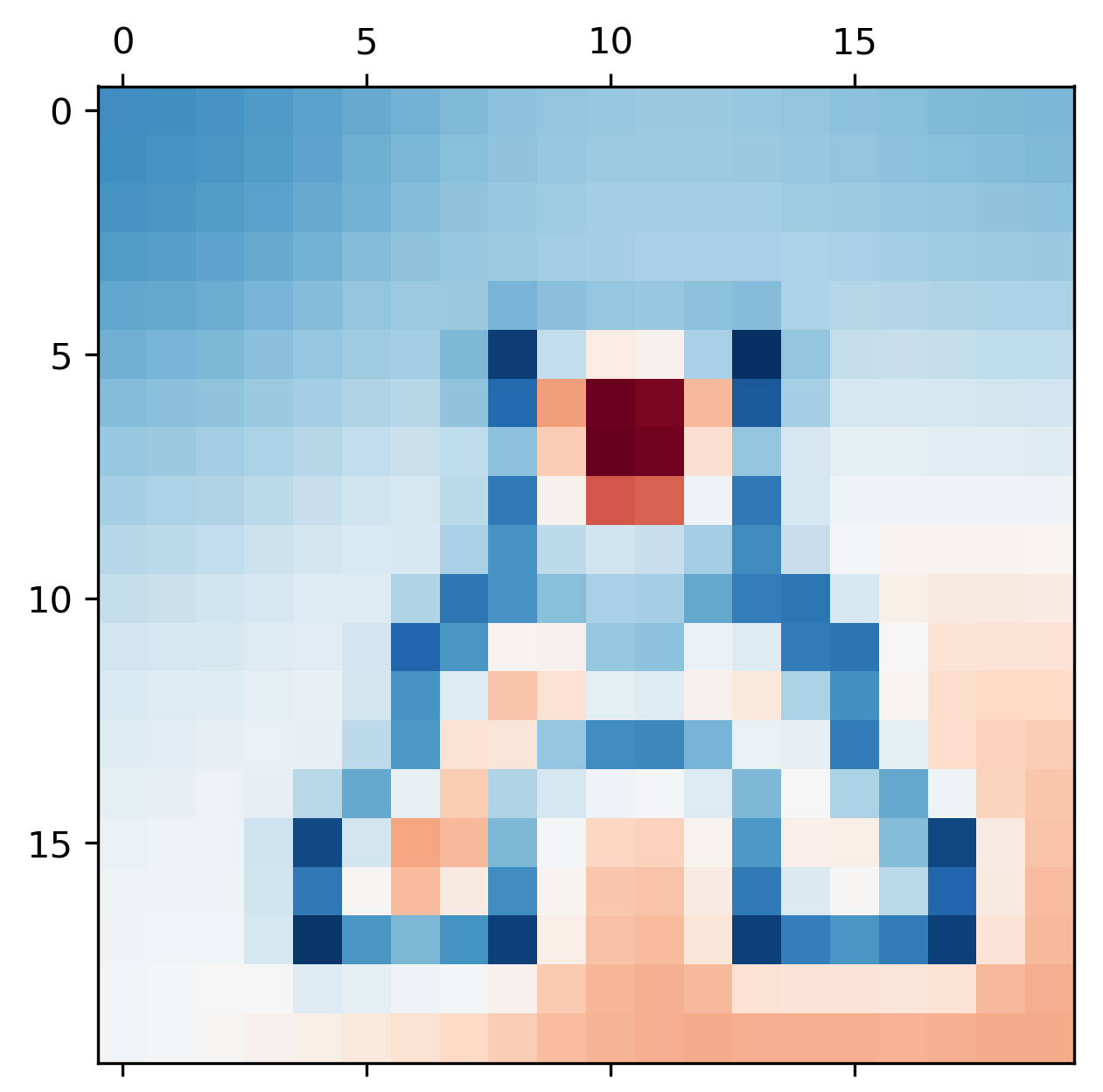}}
      
\end{multicols}

\begin{multicols}{4}
   
      \subfloat[Initial State]{\includegraphics[width=\linewidth]{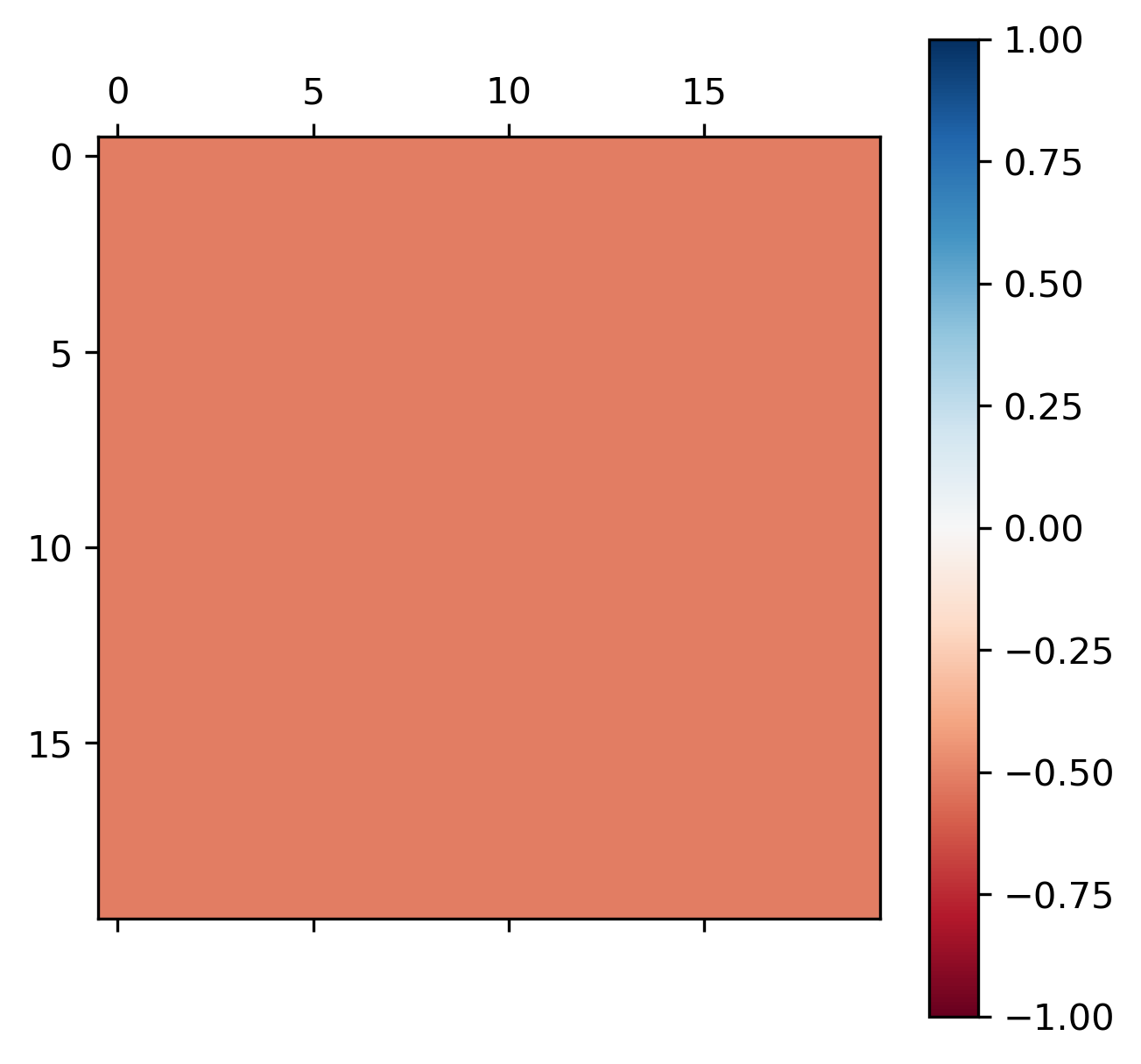}}
      \subfloat[Int. I]{\includegraphics[width=\linewidth]{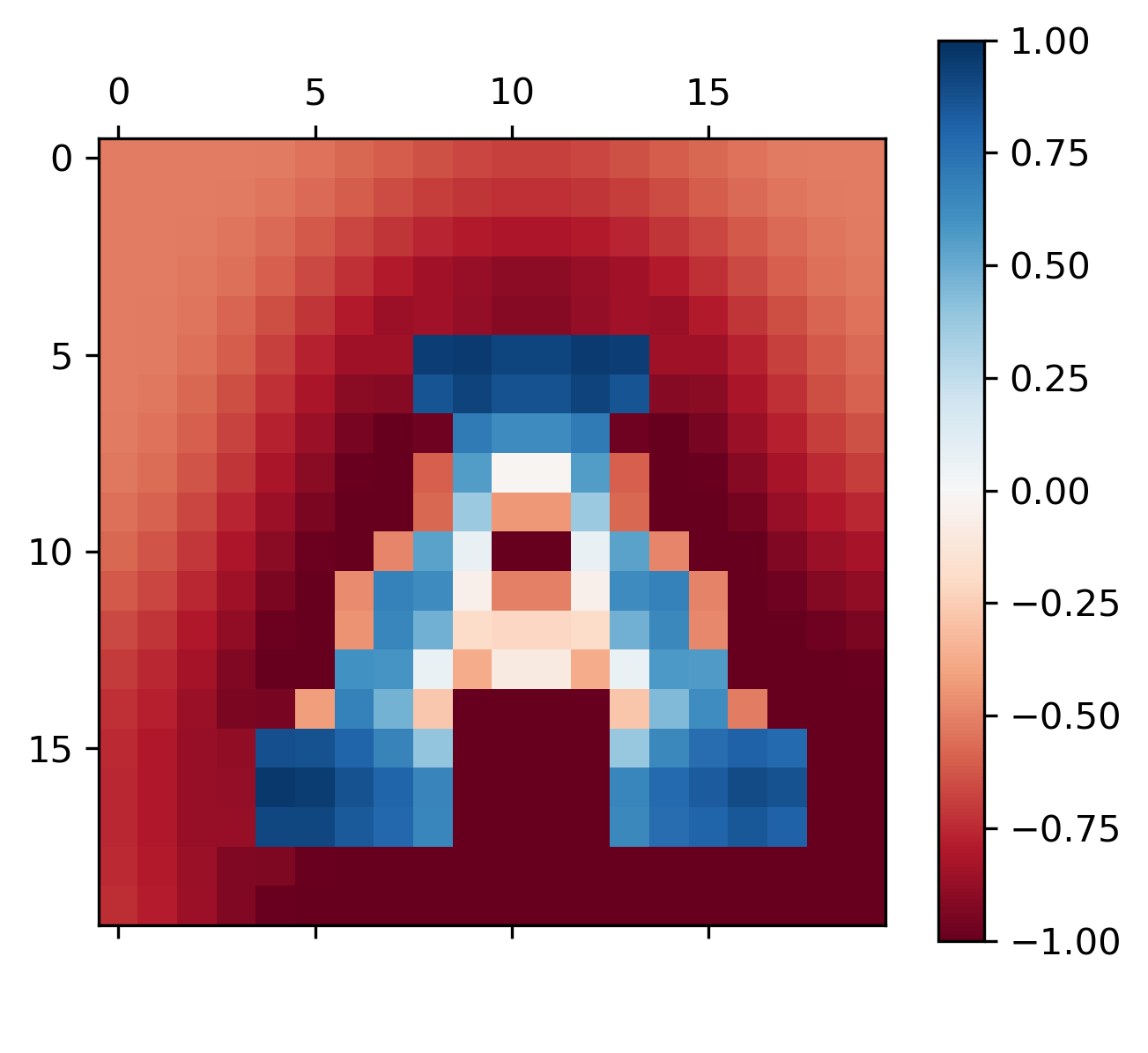}}
      \subfloat[Int.  II]{\includegraphics[width=\linewidth]{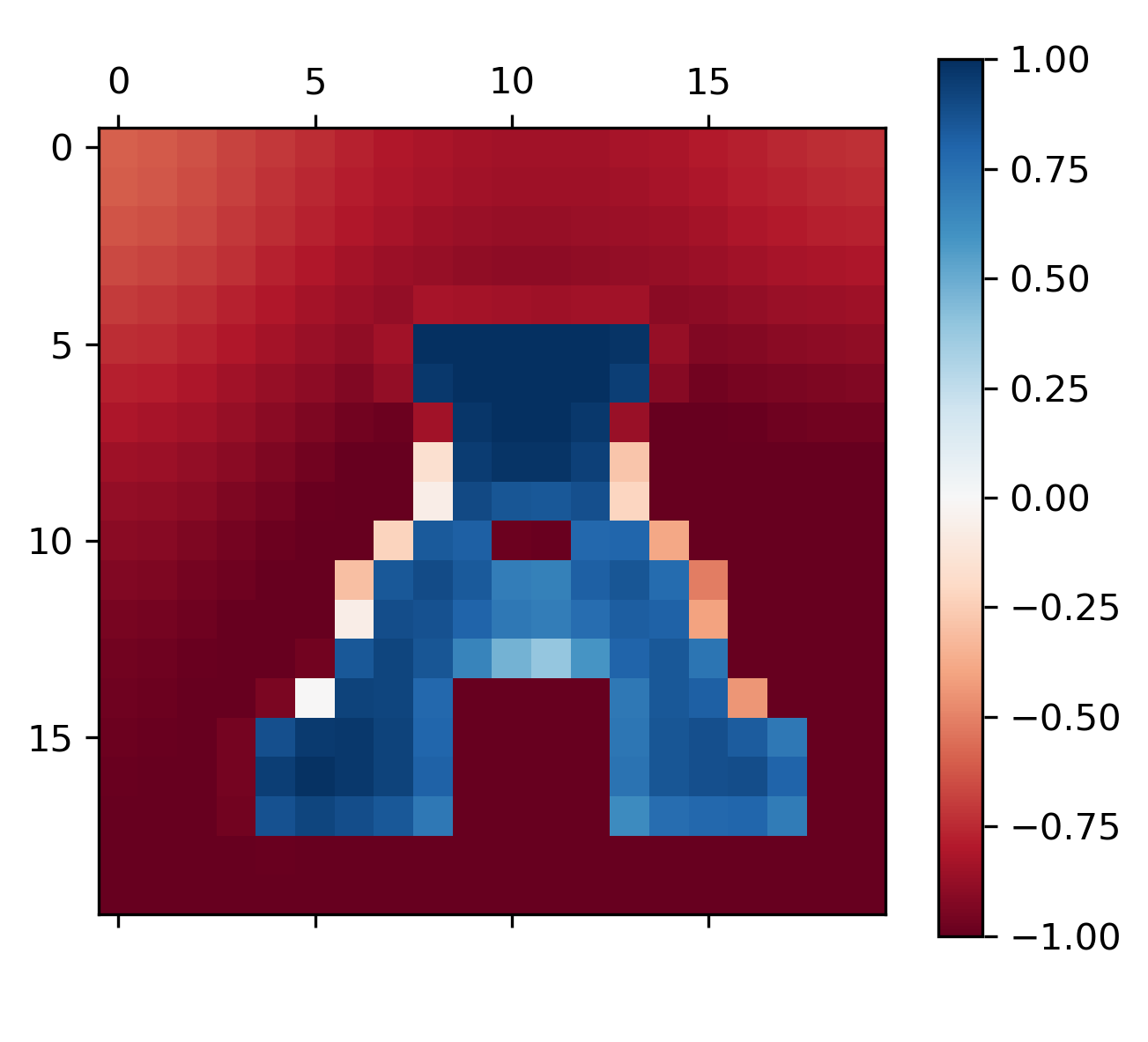}}
      \subfloat[Final State]{\includegraphics[width=\linewidth]{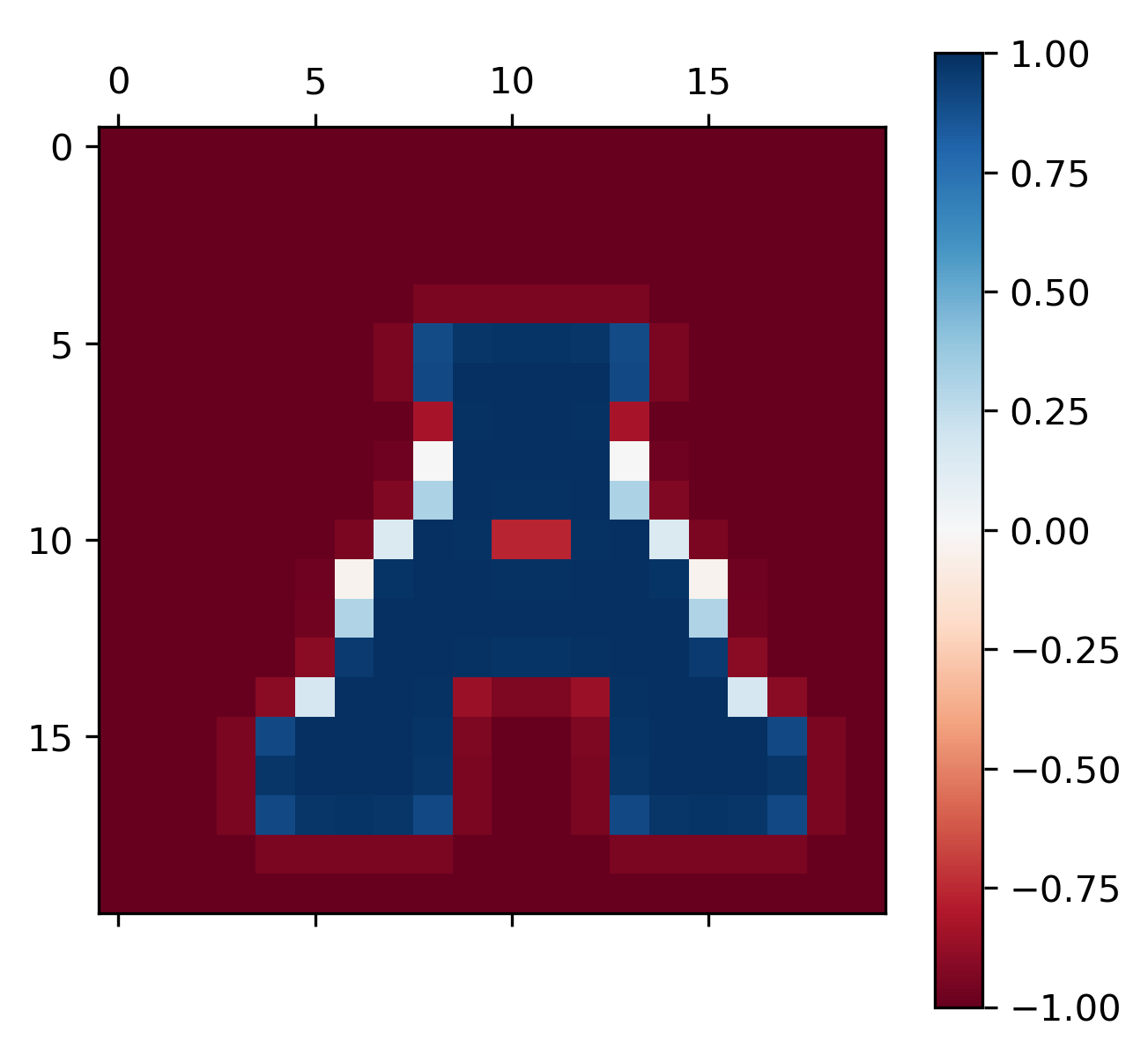}}

      
\end{multicols}

\begin{multicols}{4}

      \subfloat[1st mode]{\includegraphics[width=\linewidth]{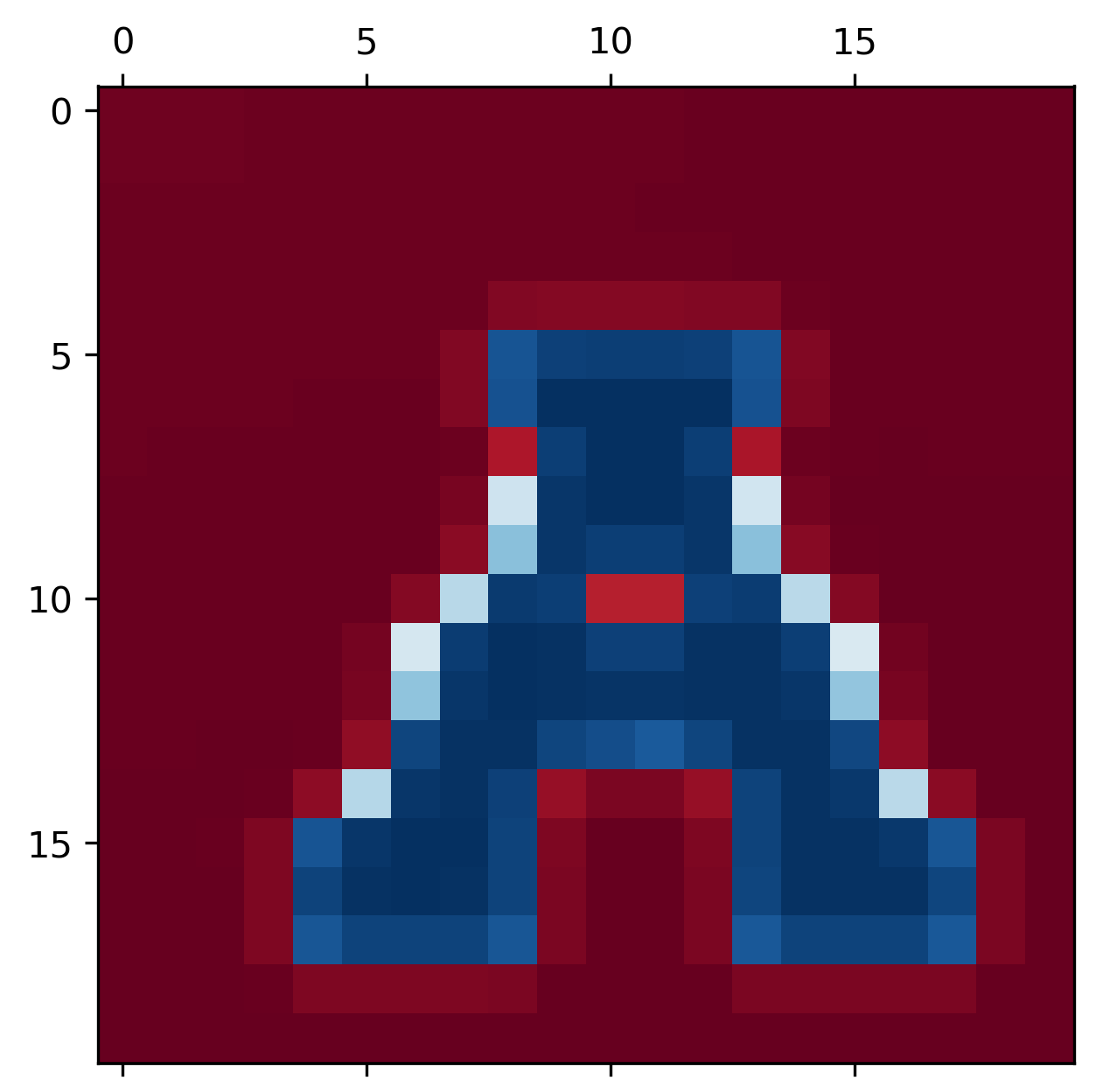}}
      \subfloat[2nd mode]{\includegraphics[width=\linewidth]{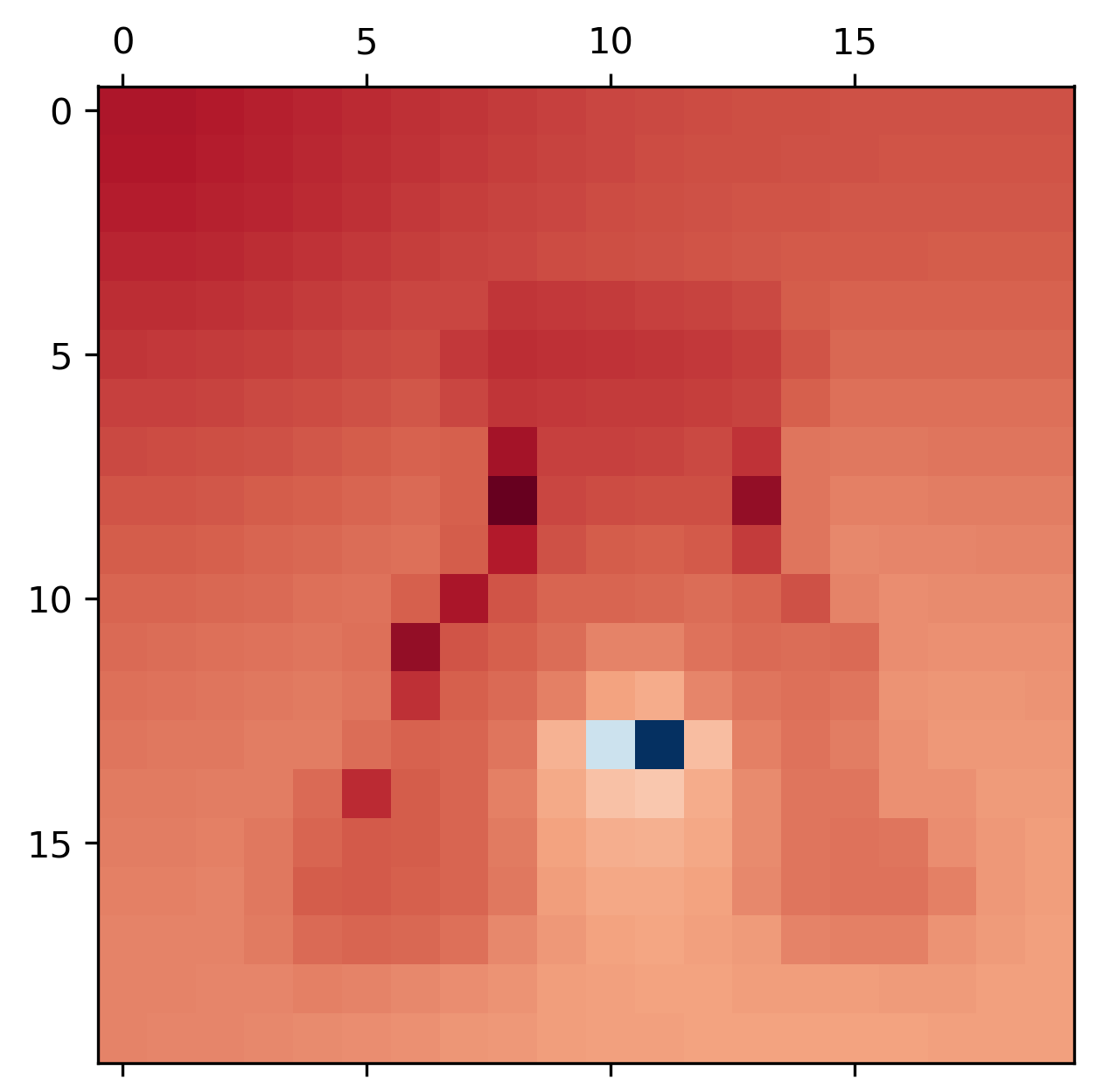}}
      \subfloat[3rd mode]{\includegraphics[width=\linewidth]{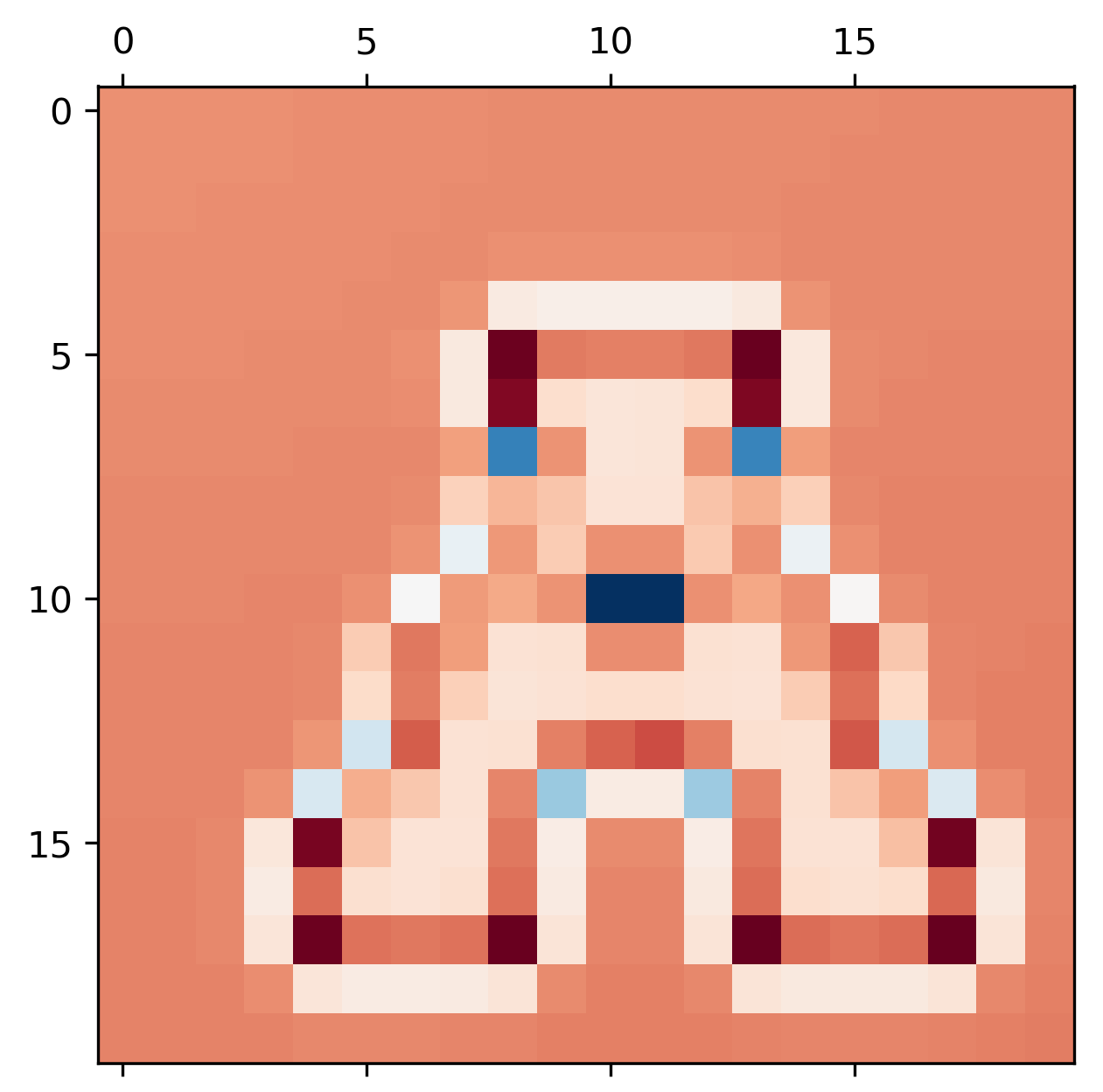}}
      \subfloat[4th mode]{\includegraphics[width=\linewidth]{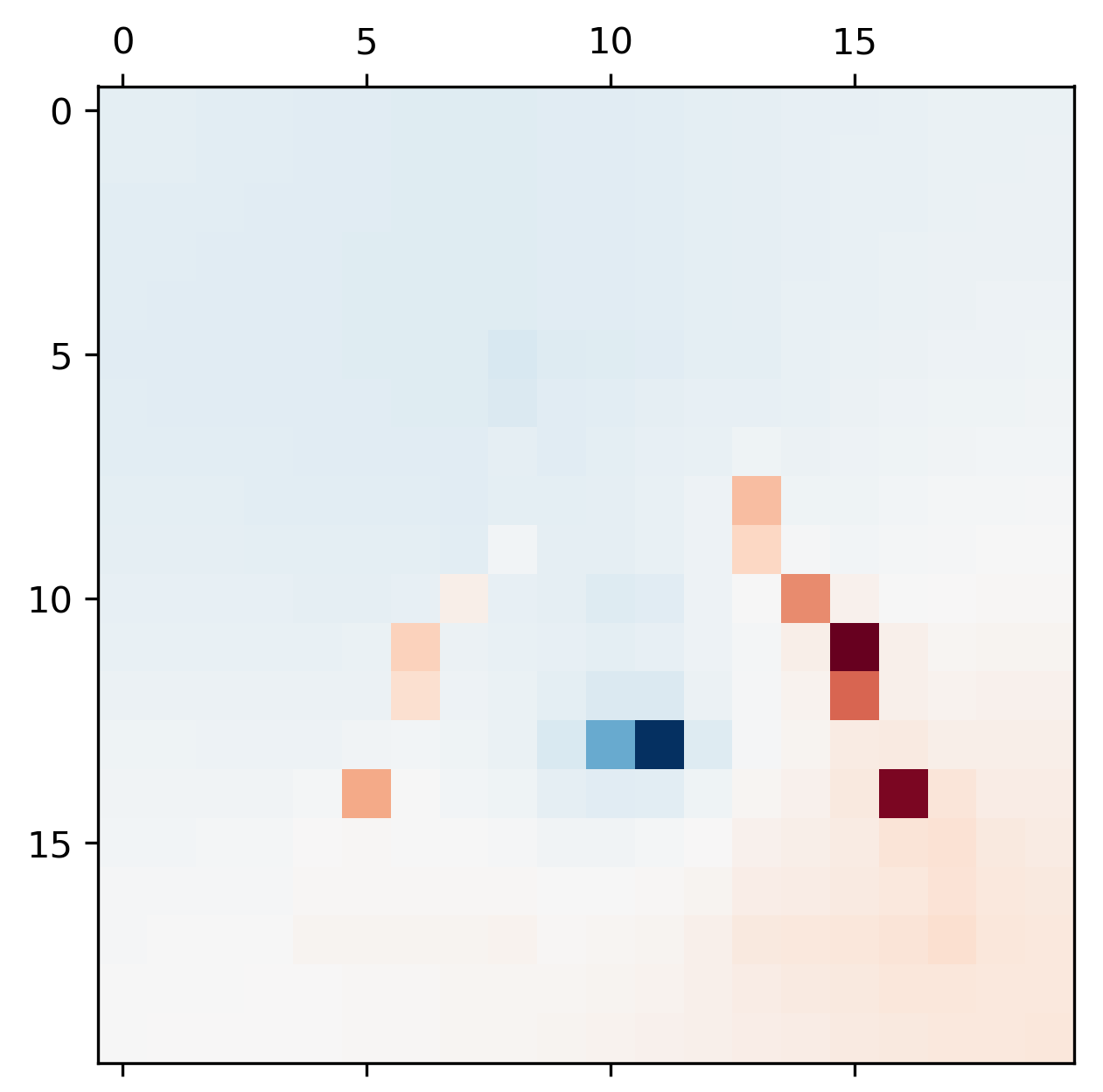}}
      
\end{multicols}

    \caption{\small The initial trajectory (a)-(d) differs significantly from the final trajectory optimized by ILQR (i)-(l). Thus, the set of basis eigenfunctions for the initial trajectory (e)-(h) will only be valid locally, and differ significantly from the reduced order subspace that the optimal trajectory lies on (m)-(p). Hence, it is key to update the reduced-order basis with each successive iteration of the algorithm.}
    \label{fig:up_top}

\end{figure}

The field of PDE control has a rich history, for instance, see Kunisch \cite{KunischOCPDE}, 
 Volkwein \cite{volkweinPhaseField}, 
and Falcone \cite{allafalconeadaptive}. 
More recently, there is a shift towards the application of reinforcement learning techniques for the control of such distributed parameter systems \cite{tnnls_pde_2}, in conjunction with model reduction approaches such as the Karhunen-Loève decomposition \cite{cybernetics}. These techniques are broadly classified into a ``reduce-then-control" (RTC) approach \cite{rtc_1,rtc_2}, and a ``control-then-reduce" (CTR) approach \cite{ctr_1}. However, in general, the CTR approach is infeasible, due to the intractability of finding the control for the full-order models in very high-dimensional problems (Curse of Dimensionality).

Recently, Kramer et al \cite{Kramer-ILQR} advocated for utilizing Balanced Truncation \cite{tacBT}, a linear time-invariant (LTI) MOR approach, for computing the reduced basis upon which the nonlinear PDE is projected to find the corresponding reduced order nonlinear system model, and further designing the optimal control for the same with the ILQR algorithm. However, in general, the corresponding reduced basis modes for the initial trajectory differ significantly as compared to the optimal trajectory. Thus, approximating the optimal control policy using the reduced order basis for the former results in unacceptable performance \cite{ravindran2002}\cite{kunisch2008}.
Consequently, there is a need to adaptively change the reduced order basis as the optimization process advances. Additionally, for nonlinear PDEs, even after model reduction, the resulting lower-order nonlinear model remains challenging and often intractable for optimal control. 
\textit{In this regard, the simple yet fundamental observation we make is that since the Proper Orthogonal Decomposition (POD) basis is local, an LTV model suffices to approximate the local behavior of the nonlinear PDE rather than a reduced nonlinear model.} 

Thus, this paper introduces a reduced-order model based RL approach, which modifies the existing Iterative Linear Quadratic Regulator (ILQR) technique to an "iterative-reduce-then-control" (IRTC) method for the optimal control of nonlinear partial differential equations. The ILQR algorithm is implemented in a data-based/ MBRL fashion, wherein we use rollouts by querying the `black-box' computational model. The system dynamics are iteratively linearized, and the cost function is quadratized around the current iterate of the optimal trajectory.
A time-varying LQR problem is then solved to generate an improved control sequence, and the process iterates till convergence. In this study, the POD Method of snapshots \cite{sirovich1987turbulence} is employed to derive a reduced-order LTV approximation of the nonlinear PDE in the vicinity of the current trajectory, followed by solving a reduced-order LQR problem, consequently generating the improved trajectory and an updated POD basis, and iterated till convergence, resulting in a drastic reduction of the computational burden of the ILQR method when extended to PDEs. The proposed approach is evaluated on the viscous Burger's equation, along with two phase-field models for reaction-diffusion equations related to microstructure evolution in multiphase materials, and demonstrates convergence to the global optimum that would be achieved without employing model reduction. This paper builds on recent findings presented in a conference paper \cite{roilqr}: it introduces a convergence analysis for the quality of the solution provided by the reduced order approach and presents more extensive computational results.

The rest of the paper is organized as follows: Section \ref{sec2} introduces the nonlinear PDE optimal control problem. We discuss the issues related to the control of high-dimensional PDEs, and provide suggestions to navigate around them in Section \ref{sec3}. Section \ref{sec4} briefly introduces the iLQR algorithm, followed by the proposed approach. In Section \ref{analysis}, we show a mathematical analysis for the convergence guarantees of the proposed approach. In Section \ref{sec5}, we 
demonstrate the application of the proposed approach through custom-defined test problems of varying dimensionality, along with repeatability, followed by the benchmarking of the approach with the standard ILQR.

\vspace{-0.15cm} 
\section{Problem Formulation}
\label{sec2}
Consider the discrete-time nonlinear dynamical system:
\begin{equation}
    x_{t+1}=f(x_t, u_t),
\label{eq:dyn}
\end{equation}
where $x_t\in \mathbb{R}^{n_x}$ and $u_t\in \mathbb{R}^{n_u}$ correspond to the state and control vectors at time $t$. The optimal control problem is to find the optimal control policy $\pi^0=\{\pi^0_0, \pi^0_1 ... \pi^0_{T-1}\}$, that minimizes the cumulative cost:
\begin{align}
    & \min_\pi J^\pi(x)=\sum_{t = 0}^{T-1} c_t(x_{t}, u_{t}) + c_T(x_T), \label{eq:FO-OCP}\tag{FO-OCP} \\
    &\text{Subject to:}\  x_{t+1}=f(x_t, u_t),
\end{align}
given some $x_0 = x$, and where $u_t=\pi_t^0(x_t)$, $c_t(\cdot)$ is the instantaneous cost function and $c_T(\cdot)$ is the terminal cost. We assume the incremental cost to be quadratic in control, such that $c_t(x_t,u_t)=l_t(x_t)+\frac{1}{2}u_t^T R u_t$.\\
In this paper, we consider dynamical systems governed by partial differential equations.
Our goal in this paper is to provide a feedback solution to such problems.

\textit{Remark:} We consider the system available to us as sufficiently finely discretized to be able to accurately represent the dynamics of the infinite-dimensional PDE system. All our claims regarding the optimality of the feedback solution are with respect to this high, albeit finite, dimensional problem.

\section{The Control of Nonlinear PDEs and the Issues Inherent}\label{sec3}
Despite there being a large number of algorithms developed for controlling dynamical systems, the control of infinite-dimensional PDEs still remains a challenge. Modeling techniques usually discretize the PDEs spatially and temporally, in order to solve them. In order for the discretization to reasonably model the dynamics of the system, we need a fine mesh, which results in a very high degree-of-freedom system, easily in the thousands. Scaling the methods for these problems leads to the so called Curse of Dimensionality \cite{bellman1957}, which makes working on such systems largely intractable. Thus we need ways to reduce the dimensionality of the model in a way that can still capture the dynamics of the system.\\
As an example, let us consider the Allen-Cahn Equation, as used in our work in the subsequent sections.
\begin{align}
    \frac{\partial \phi}{\partial t}&= -M(\frac{\partial F}{\partial\phi}-\gamma\nabla^{2}\phi),\\
    F(\phi;T,h) &= \phi^{4}+T\phi^{2}+h\phi.
\end{align}


For a realistic capture of the equation's behavior, we need to sufficiently discretize the equation. But, we observe that a discretization of $\Delta x \leq 0.02 \implies$ the dimensionality $n_x \geq 2500$. Thus, the equation has very high DoF, even for relatively small-scale estimates.

This issue of very high dimensionality is usually dealt by employing model order reduction techniques, which can largely be classified as simplified physics based approaches and projection based approach (\textit{e.g.} Proper orthogonal decomposition, balancing methods, nonlinear manifold methods). Here, we consider the proper orthogonal decomposition (POD) based approach to model reduction, which is one of the most prevalent approach in model reduction.  
\subsection{The POD Method of Snapshots}
\label{snapshotpod}
This section briefly reviews the application of the standard POD approach for dealing with high degree-of-freedom systems, particularly in the case of spatially discretized PDEs. In general, the POD approach computes the low rank approximation of the original system by employing the singular-value decomposition (SVD) on a general field variable $u(x,t)$, as $u(x,t)=\sum_{k=1}^{\infty}\alpha_k(t)\phi_k(x).$  

For high DOF systems, solving the resulting eigenvalue problem becomes computationally intractable owing to the denseness and high dimensionality of the spatially-discretized state matrix.  However, with access to the system’s dynamics in the form of discrete ``snapshot” state vectors, the ``Method of Snapshots” \cite{sirovich1987turbulence} offers a feasible approach to compute a low-order approximation of the system for a specified trajectory.

Given the initial state vector $\bar{x}_0$, and control inputs $\{u_t\}_{t=0}^{T-1}$, we propagate the system forward in time to get trajectory $\{x_t\}_{t=0}^T$. Consequently, we get the matrix $\bar{X}=\begin{bmatrix}x_0 &x_1 &\hdots &x_{T-1} &x_{T} \end{bmatrix}$ of dimension $(N\times T)$, where $T \ll N$.

Let the singular value decomposition of $\bar{X}=U\Sigma V^T$.
Unlike the direct POD approach, the method of snapshots operates on the assumption of the corresponding left- and right-singular vectors being related, and computes the eigen-decomposition of the smaller matrix $\bar{X}^T \bar{X}$ (of dimension $T\times T$), as opposed to the significantly larger $\bar{X}\bar{X}^T$. Thus, $\bar{X}^T \bar{X}=V\Sigma V^T$.
We can then retrieve the corresponding POD bases by $U=XV\Sigma^{-1/2}.$

Further, we consider the first $p$ modes that capture $99.999\%$ of the relative energy, defined as $E_{rel}=(\sum_{i=1}^{p}\lambda_i)/(\sum_{j=1}^{T}\lambda_j)\geq 0.99999$,
where $\lambda_1\geq...\geq\lambda_T$ are the non-zero singular values. With this criterion, we construct the reduced order basis matrix $\Phi$ by concatenating the singular vectors corresponding to the singular values satisfying the energy criteria $\Phi = \begin{bmatrix}\phi_{1} &\phi_{2} &\hdots &\phi_{p}
  \end{bmatrix}$.

Thus, the system can be compactly described by projecting the system onto the reduced-order basis as $\alpha = \Phi^T \bar{X}$.

\vspace{-0.3cm}

\subsection{Reduced Order Problem Formulation}
Let us assume the system dynamics to be as given in Eq.~\ref{eq:dyn}.


Given a trajectory, we can use the snapshot POD to get the reduced order dynamics as $x_t\approx \Phi \alpha_t$,
where recall that $\Phi$ denotes the POD bases obtained from the Method of Snapshots and $dim(\alpha_t) \ll dim(x_t)$.
We can now write the reduced order dynamics as 
\begin{equation}
    \alpha_{t+1}=\Phi^T f(\Phi \alpha_t,u_t).
    \label{rodyn}
\end{equation}

Thus, the  reduced order optimal control problem changes to
\begin{align}
    & \min_\pi J^\pi(\alpha)=\sum_{t = 0}^{T-1} \bar{c}_t(\alpha_{t}, u_{t}) + \bar{c}_T(\alpha_T) | \alpha_0 = \alpha, \label{eq:RO-OCP}\tag{RO-OCP}\\
    & \text{Subject to:}\  \alpha_{t+1}=\Phi^T f(\Phi \alpha_t,u_t), \nonumber
\end{align}
where $\bar{c}_t(\alpha_t)=c_t(\Phi \alpha_t)$.

\subsection{Control of the Reduced Order Model}
Projecting the system onto a reduced-order basis can drastically reduce the size of the state variables, but the reduced order system is still nonlinear and the nonlinear projections involved in forming the reduced order model can be very expensive. Techniques like SINDY \cite{SINDY} 
and POD-DEIM \cite{nguyen2020pod} \textit{etc.} are able to give an efficient estimate of the nonlinear reduced-order model, but the identified system is nonetheless strongly dependent on the POD basis. However, the POD basis is obtained from a particular trajectory of the system, and thus, we can expect that the validity of the nonlinear reduced model is restricted to a local region around the trajectory used to generate the POD basis. \\
In lieu, in order to accurately capture the local dynamics of a nonlinear model around a trajectory, we can use a local Linear Time-Varying (LTV) model which is computationally far cheaper. 
Since large perturbations can lead to a significant change in the reduced order POD basis, invalidating most nonlinear reduced order models obtained, an LTV model instead suffices for describing the behavior of a system locally around a trajectory.\\

    

    
\vspace{-0.3cm}
Given an LTV estimate around a local trajectory, the Iterative Linear Quadratic Regulator (ILQR) algorithm \cite{li2004iterative} can be applied to find the optimal control for such problems. The ILQR algorithm can be adapted directly to control PDEs, modeled as a high DoF system of coupled ODEs but this results in very large LTV systems that are computationally intractable for fine discretizations \cite{wang2021search}.  
Thus, the question arises whether we can find a way to incorporate model reduction into the ILQR algorithm. The following section proposes a method capable of doing so.

\section{The Reduced Order MBRL Algorithm}
\label{sec4}
In this section, we briefly introduce the ILQR algorithm followed by the modifications proposed for the reduced-order approach that is highly efficient for the control of PDEs.

\subsection{Iterative Linear Quadratic Regulator (ILQR)}
The ILQR Algorithm \cite{li2004iterative} iteratively solves the nonlinear optimal control problem posed in Section~\ref{sec2} using the following steps:

Given a nominal control sequence $\{u_t\}_{t=0}^{T-1}$ and initial state vector $x_0$, the state is propagated in time in accordance with the dynamics (\textit{Forward Pass}). Now, we find the corresponding local LTV system around the trajectory using input-output perturbation and a Linear Least Squares method. After getting the local LTV parameters about the system, the ILQR algorithm computes a local optimal control by solving the discrete time Riccati Equation (\textit{Backward Pass}). Now, given the gains from the backward pass, we can update the nominal control. This sequence is iterated till convergence.

\begin{remark}
The primary drawback of the ILQR scheme outlined above for PDE control is that the LTV system around a trajectory of the PDE is very high dimensional since it is $O(n_x + n_u)$, and thus, estimating the LTV system requires a very large amount of data which is intractable. Furthermore, given such a large LTV system, the backward pass is intractable as well since it is $O(n_x^3)$. Thus, in the next section, we propose a reduced order approach to iLQR that finds a reduced order LTV system using the POD method of snapshots and recursively updates the reduced order bases as the trajectory changes.
\end{remark}
\vspace{-0.15cm}
\subsection{The Reduced Order ILQR Algorithm}

\subsubsection{Forward Pass}
Given a nominal control sequence $\{u_t\}_{t=0}^{T-1}$ and initial state vector $x_0$, the state is propagated in time in accordance with the full-order nonlinear dynamics to generate the nominal trajectory $(\bar{x}_t,\bar{u}_t)$. Here, we incorporate the snapshot-POD method to get the reduced-order basis and the corresponding projections of the trajectory: $x_t \approx \Phi \alpha_t$.
Thus, we have the resulting trajectory $(\bar{\alpha}_t,\bar{u}_t)$ and the corresponding basis $\Phi$.

\subsubsection{Reduced Order LTV System Identification}

We propose to identify an LTV model in the reduced order space, which drastically reduces the computational burden of the LTV identification and the subsequent backward pass. Using Eq. \ref{rodyn}, 
and taking perturbations about the trajectory $(\bar{\alpha}_t,\bar{u}_t)$, we can linearize the system to get the corresponding linear time-varying approximation of the nonlinear reduced-order dynamics about the nominal trajectory:
\begin{align*}
    \delta \alpha_{t+1} =\hat{A}_t\delta \alpha_t + \hat{B}_t\delta u_t,
\end{align*}
where $\hat{A}_t=\hat{f}_{\alpha_t}\in \mathbb{R}^{n_\alpha\times n_\alpha}$ and $\hat{B}_t=\hat{f}_{u_t}\in \mathbb{R}^{n_\alpha \times n_u}$. To compute the reduced order model in a data-based fashion, we make use of the standard least-squares method, given the input-output experiment data.\\

\noindent \textit{Least Squares and Sample Efficiency:}
\label{sys_id_solve}


Run N simulations for each step and collect the input-output data: $Y = [\hat{A_t} ~|~ \hat{B_t}] X$
and write out the components:
\begin{align}
& Y = \begin{bmatrix}
\delta \alpha_{t+1}^{(1)}& \delta \alpha_{t+1}^{(2)} &\cdots& \delta \alpha_{t+1}^{(N)} \\
\end{bmatrix},
\nonumber\\
& X=\begin{bmatrix}
  \delta \alpha_t^{(1)} & \delta \alpha_t^{(2)}& \cdots & \delta \alpha_t^{(N)} \\
  \delta u_t^{(1)} &  \delta u_t^{(2)} & \cdots & \delta u_t^{(N)} \\
\end{bmatrix},
\label{leastsq1}
\end{align}
where $\delta u_{t}^{(n)}$ is the control perturbation vector we feed to the system at step $t$ of the $n$\textsuperscript{th} simulation, and $\delta \alpha_{t}^{(n)}$ is the reduced-order state perturbation vector that we get from running forward simulations/rollouts with the above control perturbations, and projecting the observed state perturbation vector $\delta x_t^{(n)}$ onto the reduced order subspace, \textit{i.e.}
\begin{equation*}
    \delta \alpha_t ^{(n)} = \Phi^T \delta x_t^{(n)}. 
\end{equation*}
All the perturbations are zero-mean, i.i.d, Gaussian noise with covariance matrix $\sigma I$. $\sigma$ is a $o(u)$ small value selected by the user. $\delta \alpha_{t+1}^{(n)}$ denotes the deviation of the output state vector  from the nominal state projected in the reduced order subspace, after propagating for one step.

Finally, using the standard least square method, the linearized system parameters are estimated as 
\begin{equation*}
    [\hat{A_t} ~|~ \hat{B_t}]=Y X \T(X X \T)^{-1}.
\end{equation*}
We are free to choose the distribution of $\delta {\phi_t}$ and $\delta {u_t}$. Given that we perform rollouts of the system, where $\{\delta u_t^{(i)}\}$ is a Gaussian white noise sequence for all rollouts $i$, the reduced-order state perturbations $\delta \alpha_t^{(i)}$ are also Gaussian and independent for the different rollouts $i$, for any given time $t$. This ensures that $\delta X_t \delta X_t'$ is very well conditioned.
To see this, let us consider the terms in the matrix $\delta X_t \delta X_t'=\begin{bmatrix}  \delta {\Lambda_t} \delta {\Lambda_t}' & \delta {\Lambda_t} \delta {U_t}' \\ \delta {U_t} \delta {\Lambda_t}'  & \delta {U_t} \delta {U_t}' \end{bmatrix}$,~$\delta {\Lambda_t} \delta {\Lambda_t}' = \sum_{i=1}^{n_s} \delta {\alpha_t}^{(i)} {\delta {\alpha_t}^{(i)}}'$, where `$n_s$' is the number of samples for each of the random variables, $\delta {\alpha_t}$ and $\delta {u_t}$, and we denote the random samples as $\delta {\Lambda_t} = \begin{bmatrix} \delta {\alpha_t^1}& \delta {\alpha_t^2}& \ldots &\delta {\alpha_t^{n_s}}\end{bmatrix}$, $\delta {U_t} = \begin{bmatrix} \delta {u_t^1} &\delta {u_t^2}& \ldots& \delta {u_t^{n_s}}\end{bmatrix}$ and $\delta X_t = \begin{bmatrix} \delta \Phi_t & \delta U_t \end{bmatrix}$. Similarly, $\delta {U_t} \delta {U_t}' = \sum_{i=1}^{n_s} \delta {u_t}^{(i)} {\delta {u_t}^{(i)}}'$, $\delta {U_t} \delta {\Lambda_t}' = \sum_{i=1}^{n_s} \delta {u_t}^{(i)} {\delta {\alpha_t}^{(i)}}'$ and $\delta {\Lambda_t} \delta {U_t}' = \sum_{i=1}^{n_s} \delta {\alpha_t}^{(i)} {\delta {u_t}^{(i)}}'$. 

From the definition of sample variance, for a large enough $n_s$, we can write the above matrix as:
\begin{equation*}
\begin{split}
    \delta X_t \delta X_t' &= \begin{bmatrix} \sum_{i=1}^{n_s} \delta {\alpha_t}^{(i)} {\delta {\alpha_t}^{(i)}}' & \sum_{i=1}^{n_s} \delta {\alpha_t}^{(i)} {\delta {u_t}^{(i)}}' \\ \sum_{i=1}^{n_s} \delta {u_t}^{(i)} {\delta {\alpha_t}^{(i)}}' & \sum_{i=1}^{n_s} \delta {u_t}^{(i)} {\delta {u_t}^{(i)}}'
    \end{bmatrix}\\ &\approx \begin{bmatrix} \sigma^2(n_s - 1) {\text I_{n_\alpha}} & {\text 0_{n_\alpha \times n_u}} \\ 0_{n_u \times n_\alpha} & \sigma^2 (n_s - 1) {\text I_{n_u}}\end{bmatrix} \\
    &= \sigma^2 (n_s - 1){\text I}_{(n_\alpha+n_u) \times (n_\alpha+n_u)}
\end{split}
\end{equation*}

Thus, the system identification problem reduces from $O(n_x+n_u)$ to $O(n_\alpha+n_u)$, where $n_\alpha \ll n_x$. 
This can be seen with the PDE systems studied in this paper. For example, the Allen-Cahn equation, when discretized, has a 2500-dimensional state with 4 control inputs whereas the reduced order system lies almost entirely in a 3-dimensional subspace (also see Fig. 5).

\textit{Performance Comparison with Full Order LTV:}
The reduced order LTV system entails some loss in information of the dynamics due to projection on a reduced order basis. But from control perturbation experiments, we observe that the reduced-order LTV system performs almost as well as the full-order LTV system as seen in Fig. \ref{ltv_podvsfo}, despite a significantly cheaper computation cost.  
\begin{figure}[!htpb]
\begin{multicols}{2}
    \subfloat[]{\includegraphics[width=1.1\linewidth]{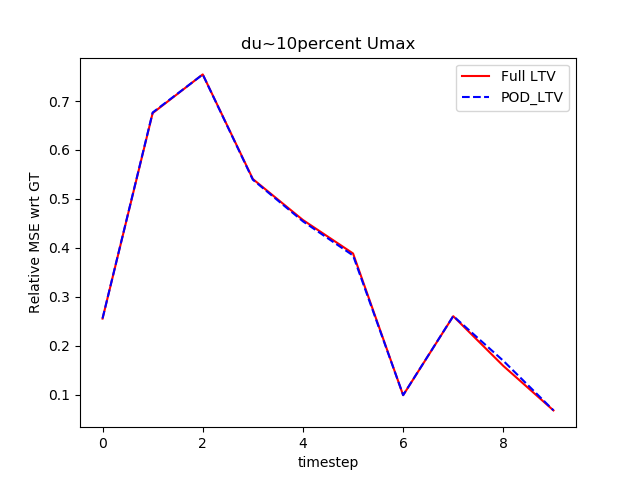}}    
      \subfloat[]{\includegraphics[width=1.1\linewidth]{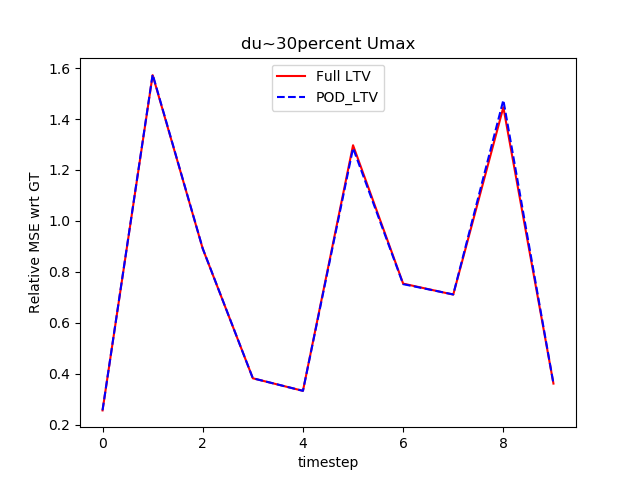}}

\end{multicols}
    
      
\caption{\small Relative performance of the Full-order vs Reduced-order LTV systems identified w.r.t. the ground truth, benchmarked on the Allen-Cahn Equation. The trajectory is appended with a Gaussian noise of std 10\% and 30\% of the max control input.}
\label{ltv_podvsfo}
\end{figure}

\subsubsection{Backward pass}
For the reduced order model, we can compute the backward pass in a similar fashion as ILQR. Given the terminal conditions $v_T(\alpha_T)=\frac{\partial \bar{c}_T}{\partial \alpha}|_{\alpha_T}=\Phi^T \frac{\partial \bar{c}_T}{\partial x}|_{x_T}$ and $V_T(\alpha_T)=\nabla ^2_{\alpha \alpha}c_T|_{\alpha_T}=\Phi ^T V_T(x_T)\Phi$, the local optimal control in the reduced space is given by
\begin{equation*}
    \delta u_t=-(R+\hat{B}^T_t V_{t+1}\hat{B}_t)^{-1}(R\bar{u}_t+\hat{B}_t^T v_{t+1}+\hat{B}_t^T V_{t+1}\hat{A}_t\delta \alpha_t),
\end{equation*}
which can be written in the linear feedback form 
\begin{equation*}
   \delta u_t =-\hat{k}_t-\hat{K}_t\delta \alpha_t, 
\end{equation*}
where
\begin{align*}
    \hat{k}_t&=(R+\hat{B}_t^T V_{t+1}\hat{B}_t)^{-1}(R\bar{u}_t+\hat{B}_t^T v_{t+1})\\
    \hat{K}_t&=(R+\hat{B}_t^T V_{t+1}\hat{B}_t)^{-1}\hat{B}^T_t V_{t+1}\hat{A}_t.
\end{align*}
Since $c_t(x_t)=\bar{c}_t(\alpha_t)$, the corresponding equations for $v_t$ and $V_t$ are
\begin{subequations}
\begin{align}
    v_t &= l_{t,\alpha}+\hat{A}_t^T v_{t+1}-\hat{A}_t^T V_{t+1}\hat{B}_t(R+\hat{B}_t^T V_{t+1}\hat{B}_t)^{-1}\nonumber \\
    &\cdot(\hat{B}_t^T v_{t+1}+R\bar{u}_t) \label{eq:vtr},\\
    V_t &= l_{t,\alpha \alpha}+\hat{A}_t^T (V_{t+1}^{-1}+\hat{B}_t R^{-1}\hat{B}_t^T)^{-1}\hat{A}_t,\nonumber \\
    &=l_{t,\alpha \alpha}+\hat{A}_t^T V_{t+1}\hat{A}_t-\hat{A}_t^T V_{t+1}\hat{B}_t(R+\hat{B}_t^T V_{t+1}\hat{B}_t)^{-1} \label{eq:Vtr} \nonumber \\
    &\cdot \hat{B}_t^T V_{t+1}\hat{A}_t,
\end{align}
\end{subequations}
where $l_{t,\alpha}=\Phi^T l_{t,x}$ and $l_{t,\alpha\alpha}=\Phi^T l_{t,xx}\Phi$. Again, from Eqs. \ref{eq:vtr} and \ref{eq:Vtr} we see that, given the terminal conditions and the local LTV model parameters $(\hat{A}_t, \hat{B}_t)$, we can do a backward-in-time sweep to compute all values of $v_t$ and $V_t$, and thus, the corresponding optimal control for the reduced order trajectory.

\subsubsection{Update trajectory}
The trajectory update step remains the same as in ILQR, except that the gains from the backward pass are in the reduced-order space as well. Thus, 
\begin{align*}
    \bar{u}_t^{k+1} &= \bar{u}_t^k + \hat{k}_t + \hat{K}_t(\alpha_t^{k+1}-\alpha_t^k), \\
    x^{k+1}_0 &= x^k_0.
\end{align*}

The updated cost is, thus, computed and, if the convergence criteria is not met, we iterate the process by repeating steps (1) to (4). Note that the new forward pass with the improved control sequence now results in a new reduced order basis as outlined in Step 1, and then, the process is repeated with the new reduced basis.\\

The algorithm is summarized and presented in Algorithm~\ref{roilqr_algo}.

\begin{algorithm}
  \caption{\strut Reduced-Order ILQR (RO-ILQR) Algorithm}
   {\bf Initialization:} Set state $x = x_0$, initial guess $\bar{U}^0 = \bar{u}^0_{0:T-1}$, line search parameter $\alpha = 1$, 
   convergence coefficient $\Gamma = 0.0001$.
  \While {$cost_{m}/cost_{m-1} < 1 - \Gamma$}{
  1. $(\bar{X}^m,\bar{U}^m) \gets forward\_dynamics(x_0, \bar{u}_{0:T-1}^m)$.

  2. Apply the method of snapshots (Sec.~\ref{snapshotpod}) to get the reduced order basis $\Phi$ and the corresponding reduced order trajectory $(\bar{\alpha}^m, \bar{U}^m)$.


  3. Run N rollouts to generate data matrices X and Y (Eq.~\ref{leastsq1}). Do the reduced-order LTV system identification as $[\hat{A_t} ~|~ \hat{B_t}]= Y X \T(X X \T)^{-1}.$
  
4. Using Eqs.~\ref{eq:vtr} and \ref{eq:Vtr} and given terminal conditions for $v_T$ and $V_T$, compute ILQR gains $\{\hat{k}_{0:T-1}, \hat{K}_{0:T-1}\}$ through a backward-in-time sweep.

  5. Trajectory update: $\bar{u}_t^{m+1} = \bar{u}_t^m + \hat{k}_t + \hat{K}_t(\alpha_t^{m+1}-\alpha_t^m)$.
  
  }
  \label{roilqr_algo}
\end{algorithm}

\section{Convergence analysis of the reduced order solution}
\label{analysis}
In this section, we provide a mathematical analysis to bound the sub-optimality of the approach introduced
 in Section \ref{sec4}. We first introduce the perturbed LQR problem about the nominal trajectory, and under certain assumptions show that the solutions to the full-order and reduced-order perturbed LQR problems at each iteration are uniformly bounded. Next, we determine the compact set to which the algorithm descends monotonically, and consequently, determine the convergence guarantees of the proposed approach.

\subsection{Preliminaries}
Given the optimization problem \ref{eq:FO-OCP}, we can write the corresponding trajectories $\Bar{X}=\{x_0,x_1,..,x_T\}$ in the reduced order subspace $\Phi$, computed as given in Sec.~\ref{snapshotpod}, as 
\begin{equation*}
    \Bar{X} = \Phi \Bar{\alpha},
\end{equation*}

where $\Bar{\alpha}=\{\alpha_0,\alpha_1,...,\alpha_T\}$. 



Let the cost functions associated with \ref{eq:FO-OCP} and \ref{eq:RO-OCP} be $J(\cdot)$ and $\Hat{J}(\cdot)$ respectively. If there was no loss in information when projecting upon the reduced order subspace, \textit{i.e.}, $x_t = \Phi \alpha_t$, the functions $J(\cdot)$ and $\hat{J}(\cdot)$ coincide, and thus correspond to the same optimal solution. But, in general, neglecting the higher order POD modes results in some loss of information, which is reflected as deviation in the trajectories of the reduced order system as compared to the original system. Thus, our optimizations will operate on different functions $J$ and $\hat{J}$, and hence, we may lose guarantees regarding the optimality of the reduced order solution.

Heuristically, if $x_t\approx \Phi \alpha_t$, the performance of the solution of the reduced LQR problem should be close to the performance of the full order solution.\\


\subsubsection{Perturbed LQR problem}
Given the current nominal trajectory $(\bar{X}^{(k)},\bar{U}^{(k)})$ at the $k^{th}$ iterate, and the initial condition $x(0)=x_0$, we can linearize around the trajectory $(X^{(k)}, U^{(k)})$ to get the linear time-varying dynamics
\begin{equation}
    \delta x^{(k)}_{t+1}=A_t \delta x^{(k)}_t+ B_t \delta u^{(k)}_t. \label{eq:FO-LTV}
\end{equation}


We can also write the perturbed cost about the nominal as
\begin{align*}
    \delta J^{(k)} = &\sum_{t=0}^{T-1} ((c^{(k)}_x)^T \delta x^{(k)}_t + (\delta x^{(k)}_t)^T c^{(k)}_{xx} \delta x^{(k)}_t + (c^{(k)}_u)^T \delta u^{(k)}_t \\
    &+ (\delta u^{(k)}_t)^T c^{(k)}_{uu} \delta u^{(k)}_t) + (c^{(k)}_{T,x})^T \delta x^{(k)}_T \\
    &+ (\delta x^{(k)}_T)^T c^{(k)}_{T,xx}\delta x^{(k)}_T.
\end{align*}
For simplicity, in the following analysis, we consider a cost function which is quadratic in state and control. However, it is important to note that the methods and results derived can be extended to general cost functions. Thus, we can write 
$c^{(k)}_{x} (\Bar{x}^{(k)}_t,\Bar{u}^{(k)}_t) = (Q^{(k)}_t)^T \bar{x}^{(k)}_t,c^{(k)}_{xx} (\cdot)= Q^{(k)}_t, c^{(k)}_{u}(\cdot) = (R^{(k)})^T \Bar{u}^{(k)}_t$, and $c^{(k)}_{uu}(\cdot) = R^{(k)}$. Then, the \ref{eq:FO-OCP} can be expressed as a perturbed LQR problem about the nominal, \textit{i.e.},

\begin{align}
    & \min_{\delta u_t^{(k)}} \delta J^{(k)} = \sum_{t=0}^{T-1} \Bigg((\Bar{x}^{(k)}_t)^T Q^{(k)}_t \delta x^{(k)}_t + (\delta x^{(k)}_t)^T Q^{(k)}_t \delta x^{(k)}_t+ \nonumber \\ &~~~~~~(\bar{u}_t^{(k)})^T R^{(k)} \delta u^{(k)}_t 
    +(\delta u^{(k)}_t)^T R^{(k)} \delta u^{(k)}_t\Bigg)+\nonumber \\ &~~~~~~ \Bigg((\Bar{x}^{(k)}_T)^T Q^{(k)}_T \delta x^{(k)}_T + (\delta x^{(k)}_T)^T Q^{(k)}_T \delta x^{(k)}_T \Bigg)
    \label{eq:FO-LQR}\tag{FO-LQR} \\
    & \text{Subject to:}\  \delta x^{(k)}_{t+1}=A_t \delta x^{(k)}_t + B_t \delta u^{(k)}_t. \nonumber
\end{align}

Similarly, we can write the perturbed reduced-order LQR problem as 

\begin{align}
    & \min_{\delta u_t^{(k)}} \delta \hat{J}^{(k)} = \sum_{t=0}^{T-1} \Bigg((\Phi \bar{\alpha}_t^{(k)})^T Q_t^{(k)} \Phi \delta \alpha^{(k)}_t + \nonumber \\
    &~~~~~ (\Phi\delta \alpha^{(k)}_t)^T Q_t^{(k)} (\Phi\delta \alpha^{(k)}_t) + (\bar{u}_t^{(k)})^T R^{(k)} \delta u^{(k)}_t + \nonumber \\
    &~~~~~(\delta u^{(k)}_t)^T R^{(k)} \delta u^{(k)}_t \Bigg) + (\Phi \bar{\alpha}_T^{(k)})^T Q_T^{(k)} (\Phi\delta \alpha^{(k)}_T) + \nonumber \\
    &~~~~~ (\Phi\delta \alpha^{(k)}_T)^T Q_T^{(k)}(\Phi\delta \alpha^{(k)}_T),\label{eq:RO-LQR}\tag{RO-LQR} \\
    & \text{Subject to:}\  \delta \alpha^{(k)}_{t+1}=\Phi^T A_t \Phi \delta \alpha^{(k)}_t + \Phi^T B_t \delta u^{(k)}_t. \nonumber
\end{align}

We now state the assumptions under which we establish guarantees regarding the performance of the reduced-order algorithm. 
\begin{assumption}{(\textbf{A1})}\label{A1}
    We assume that $x^{(k)}_t$ is close to $\Phi \alpha^{(k)}_t$, for all iterates $k$ and
    for all snapshots along the trajectory. Also, for the linearized systems above, the same holds, \textit{i.e.}, 
    \begin{align*}
      &||x^{(k)}_t - \Phi \alpha^{(k)}_t||\leq \epsilon~~ \forall t,  \\
      &||\delta x^{(k)}_t-\Phi \delta \alpha^{(k)}_t|| \leq 2\epsilon~~\forall t.
    \end{align*}
    
\end{assumption}


\begin{assumption}{\textbf{(A2)}}\label{A3}
    The cost functions $c^{(k)}_t(\cdot)$ and $c^{(k)}_T(\cdot)$ at the $k^{th}$ iterate are chosen such that the states $x^{(k)}_t$, $\delta x^{(k)}_t$ and $\Phi \delta \alpha^{(k)}_t$ are uniformly bounded, \textit{i.e.}, $||Q^{(k)}_t x_t^{(k)}||,||Q^{(k)}_t \delta x_t^{(k)}||, ||Q^{(k)}_t \Phi \delta \alpha_t^{(k)}||\leq C(k) \leq \Bar{C}$ for all $t=\{0,1,...,T-1\}$, and $||Q^{(k)}_T x_T^{(k)} ||,||Q^{(k)}_T \delta x_T^{(k)}||,||Q^{(k)}_T \Phi \delta \alpha_T^{(k)}||\leq C_T(k) \leq \Bar{C}$.
   
\end{assumption}
\begin{assumption}{\textbf{(A3)}}\label{A4}
    The smallest singular value of the Hessian of the cost function is uniformly bounded away from zero at every iterate, \textit{i.e.}, $\sigma_{min}(\frac{\partial^2 \delta J}{\partial (\delta U)^2}|_{\delta U^{(k)}}) > \bar{\sigma}>0$.
\end{assumption}

With the preliminaries stated above, we introduce the following results. Note that in the subsequent development, we drop the use of the superscript $k$ to simplify notation.\\

\begin{lemma}\label{L1}
    Under assumptions \textbf{A\ref{A1}} and \textbf{A\ref{A3}}, given the same control sequence $\delta U$, the costs of the \ref{eq:FO-LQR} and \ref{eq:RO-LQR} satisfy:
    \begin{equation*}
     |\delta J(\delta U)-\delta \hat{J}(\delta U)|\leq \bar{C}_1\epsilon~~\forall \delta U,   
    \end{equation*}
    where $\Bar{C}_1=7(T+1)\Bar{C}$ and $\delta U = \{\delta u_t\}_{t=0}^{T-1}$  (Fig.~\ref{fig:costs_close_samectrl}).
\end{lemma}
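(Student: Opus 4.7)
\textbf{Proof proposal for Lemma \ref{L1}.}

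My plan is to subtract the two cost functions term-by-term and show that the mismatch at each time step is bounded by $7\bar{C}\epsilon$, so that the sum over the $T+1$ time indices yields the claimed constant $\bar{C}_1 = 7(T+1)\bar{C}$. The first observation, which is essentially free, is that every term in \ref{eq:FO-LQR} and \ref{eq:RO-LQR} that involves only $\bar{u}_t$, $\delta u_t$, and $R$ is identical in the two problems. Since the lemma compares costs under the \emph{same} $\delta U$, all control-related terms cancel in the difference $\delta J - \delta \hat{J}$, and only state-dependent terms remain.

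Next, I would introduce the truncation residuals $e_t := \bar{x}_t - \Phi \bar{\alpha}_t$ and $\delta e_t := \delta x_t - \Phi \delta \alpha_t$, which by Assumption \textbf{A\ref{A1}} satisfy $\|e_t\| \leq \epsilon$ and $\|\delta e_t\| \leq 2\epsilon$ for every $t$. Writing $\bar{x}_t = \Phi \bar{\alpha}_t + e_t$ and $\delta x_t = \Phi \delta \alpha_t + \delta e_t$, I expand each pair of stage terms. The cross term becomes
\begin{equation*}
\bar{x}_t^T Q_t \delta x_t - (\Phi \bar{\alpha}_t)^T Q_t (\Phi \delta \alpha_t) = (\Phi \bar{\alpha}_t)^T Q_t \delta e_t + e_t^T Q_t (\Phi \delta \alpha_t) + e_t^T Q_t \delta e_t,
\end{equation*}
and the quadratic term becomes
\begin{equation*}
\delta x_t^T Q_t \delta x_t - (\Phi \delta \alpha_t)^T Q_t (\Phi \delta \alpha_t) = 2(\Phi \delta \alpha_t)^T Q_t \delta e_t + \delta e_t^T Q_t \delta e_t.
\end{equation*}
Applying Cauchy--Schwarz together with the bounds on $\|Q_t x_t\|$, $\|Q_t \delta x_t\|$ and $\|Q_t \Phi \delta \alpha_t\|$ supplied by Assumption \textbf{A\ref{A3}}, each of the three dominant contributions $(\Phi \bar{\alpha}_t)^T Q_t \delta e_t$, $e_t^T Q_t (\Phi \delta \alpha_t)$, and $2(\Phi \delta \alpha_t)^T Q_t \delta e_t$ is bounded by $2\bar{C}\epsilon$, $\bar{C}\epsilon$, and $4\bar{C}\epsilon$ respectively, totalling $7\bar{C}\epsilon$ per step. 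The leftover $e_t^T Q_t \delta e_t$ and $\delta e_t^T Q_t \delta e_t$ terms are of order $\epsilon^2$ and will be absorbed into the dominant first-order bound (or dropped using smallness of $\epsilon$).

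The final step is to sum the per-stage bound over $t = 0, 1, \ldots, T-1$ and add the identical estimate for the terminal term with $(Q_T, \bar{x}_T, \delta x_T)$, giving a total of $T+1$ contributions each of magnitude at most $7\bar{C}\epsilon$. Combining yields $|\delta J(\delta U) - \delta \hat{J}(\delta U)| \leq 7(T+1)\bar{C}\epsilon = \bar{C}_1 \epsilon$, uniformly in $\delta U$, which is the claim.

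The main obstacle I anticipate is purely bookkeeping: identifying that $\|Q_t \Phi \bar{\alpha}_t\|$ (which does not appear literally in \textbf{A\ref{A3}}) is still controlled by $\bar{C}$, either directly from the assumption's intent or via $\|Q_t \Phi \bar{\alpha}_t\| \leq \|Q_t \bar{x}_t\| + \|Q_t e_t\|$, and then verifying that the count of first-order terms indeed matches the stated constant $7$. The subtle but essential ingredient is that Assumption \textbf{A\ref{A3}} supplies bounds on $Q_t$ applied to the \emph{trajectory} quantities (not just $\|Q_t\|$ alone), which is exactly what makes each error term $O(\bar{C}\epsilon)$ rather than scaling with the state magnitude itself.
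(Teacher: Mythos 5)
Your proof is correct and follows essentially the same route as the paper's: telescope the state-dependent part of $\delta J-\delta\hat J$ into products of a factor bounded by Assumption A2 with a residual bounded by Assumption A1, count $7\bar{C}\epsilon$ per stage, and sum over the $T+1$ stages. The only difference is the grouping: the paper writes the difference exactly as $\bar{x}_t^T Q_t(\delta x_t-\Phi\delta\alpha_t)+\delta x_t^T Q_t(\delta x_t-\Phi\delta\alpha_t)+(\bar{x}_t-\Phi\bar{\alpha}_t)^T Q_t(\Phi\delta\alpha_t)+(\delta x_t-\Phi\delta\alpha_t)^T Q_t(\Phi\delta\alpha_t)$, keeping full-order left factors so that no $O(\epsilon^2)$ remainders appear and the bound $\|Q_t\Phi\bar{\alpha}_t\|\le\bar{C}$ (which, as you note, is not literally in A2) is never needed, whereas your full substitution leaves the $e_t^T Q_t\delta e_t$ and $\delta e_t^T Q_t\delta e_t$ terms to be separately absorbed.
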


\begin{proof}
    See Sec~\ref{l1proof}.
\end{proof}

\begin{figure}
\centering
\def\svgwidth{1\columnwidth}
\begingroup%
  \makeatletter%
  \providecommand\color[2][]{%
    \errmessage{(Inkscape) Color is used for the text in Inkscape, but the package 'color.sty' is not loaded}%
    \renewcommand\color[2][]{}%
  }%
  \providecommand\transparent[1]{%
    \errmessage{(Inkscape) Transparency is used (non-zero) for the text in Inkscape, but the package 'transparent.sty' is not loaded}%
    \renewcommand\transparent[1]{}%
  }%
  \providecommand\rotatebox[2]{#2}%
  \newcommand*\fsize{\dimexpr\f@size pt\relax}%
  \newcommand*\lineheight[1]{\fontsize{\fsize}{#1\fsize}\selectfont}%
  \ifx\svgwidth\undefined%
    \setlength{\unitlength}{595.51871713bp}%
    \ifx\svgscale\undefined%
      \relax%
    \else%
      \setlength{\unitlength}{\unitlength * \real{\svgscale}}%
    \fi%
  \else%
    \setlength{\unitlength}{\svgwidth}%
  \fi%
  \global\let\svgwidth\undefined%
  \global\let\svgscale\undefined%
  \makeatother%
  \begin{picture}(1,0.5810119)%
    \lineheight{1}%
    \setlength\tabcolsep{0pt}%
    \put(0,0){\includegraphics[width=\unitlength,page=1]{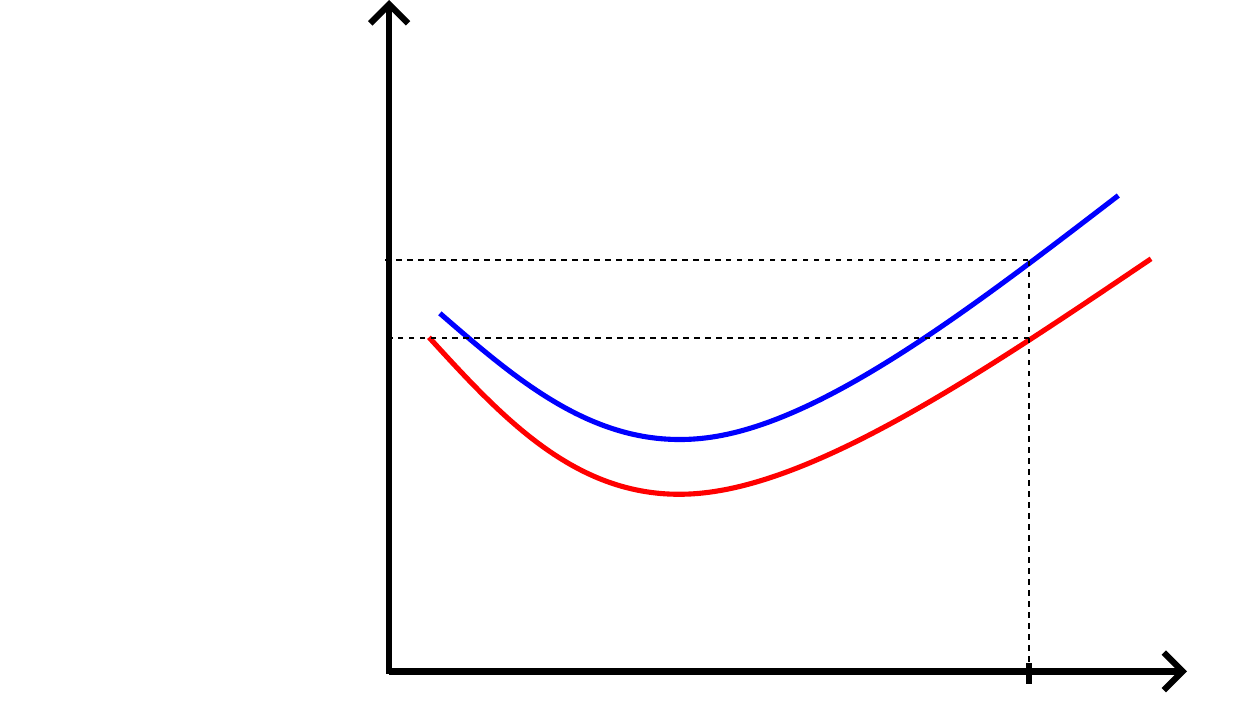}}%
    \put(0.81293904,0.00266394){\color[rgb]{0,0,0}\makebox(0,0)[lt]{\lineheight{1.25}\smash{\begin{tabular}[t]{l}$\delta U$\end{tabular}}}}%
    \put(0.9145842,0.44040058){\color[rgb]{0,0,0}\makebox(0,0)[lt]{\lineheight{1.25}\smash{\begin{tabular}[t]{l}$\delta J$\end{tabular}}}}%
    \put(0.93392545,0.38501256){\color[rgb]{1,0,0}\makebox(0,0)[lt]{\lineheight{1.25}\smash{\begin{tabular}[t]{l}$\delta \hat{J}$\end{tabular}}}}%
    \put(0.00703989,0.33197246){\color[rgb]{0,0,0}\makebox(0,0)[lt]{\lineheight{1.25}\smash{\begin{tabular}[t]{l}$|\delta J-\delta \hat{J}|\leq \bar{C}_1 \epsilon$\end{tabular}}}}%
  \end{picture}%
\endgroup%

\caption{\small The cost functions $\delta J(\cdot)$ and $\delta\hat{J}(\cdot)$ are close to each other given the same control input.}
\label{fig:costs_close_samectrl}
\end{figure}

\begin{lemma}\label{L2}
    Let the optimal control inputs corresponding to the minima of the cost functions $\delta J$ and $\delta \hat{J}$ be $\delta U^*$ and $\delta \hat{U}^*$ respectively. Then, under assumption \textbf{A\ref{A1}} and Lemma \ref{L1}, the minima of the \ref{eq:FO-LQR} and \ref{eq:RO-LQR} satisfy: 
    \begin{equation*}
       |\delta J(\delta U^*)-\delta \hat{J}(\delta \hat{U}^*)|\leq \Bar{C}_1 \epsilon. 
    \end{equation*}
\end{lemma}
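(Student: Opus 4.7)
The plan is to derive Lemma 2 as a direct consequence of Lemma 1 by a standard ``optimum sandwich'' argument, exploiting the two-sided uniform bound $|\delta J(\delta U)-\delta \hat J(\delta U)|\le \bar C_1\epsilon$ that Lemma 1 supplies for \emph{every} admissible control sequence $\delta U$. The essential observation is that Lemma 1 lets us freely substitute one cost function for the other at the cost of an additive error $\bar C_1\epsilon$, while optimality of $\delta U^*$ (resp.\ $\delta \hat U^*$) lets us replace them inside the corresponding objective without increasing its value.

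Concretely, I would proceed in two symmetric steps. First, to upper-bound $\delta\hat J(\delta\hat U^*)-\delta J(\delta U^*)$, evaluate $\delta\hat J$ at the suboptimal point $\delta U^*$: by optimality of $\delta\hat U^*$ for $\delta\hat J$, $\delta\hat J(\delta\hat U^*)\le \delta\hat J(\delta U^*)$, and by Lemma 1, $\delta\hat J(\delta U^*)\le \delta J(\delta U^*)+\bar C_1\epsilon$. Chaining these gives one side of the bound. Second, to upper-bound $\delta J(\delta U^*)-\delta\hat J(\delta\hat U^*)$, evaluate $\delta J$ at $\delta\hat U^*$: by optimality of $\delta U^*$ for $\delta J$, $\delta J(\delta U^*)\le \delta J(\delta\hat U^*)$, and by Lemma 1, $\delta J(\delta\hat U^*)\le \delta\hat J(\delta\hat U^*)+\bar C_1\epsilon$. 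Combining the two one-sided inequalities yields the absolute-value statement.

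There is essentially no obstacle here; the work has already been done in Lemma 1. The only subtlety worth flagging is that Lemma 1 must apply to the \emph{same} control sequence evaluated in both cost functions, which is the case since $\delta U^*$ and $\delta\hat U^*$ are admissible for both \ref{eq:FO-LQR} and \ref{eq:RO-LQR} (they live in $\mathbb{R}^{n_u}$ at each time step, not in any reduced space). Consequently, assumptions \textbf{A\ref{A1}} and \textbf{A\ref{A3}} are inherited from Lemma 1 with no extra hypothesis needed, and assumption \textbf{A\ref{A4}} (ensuring that the minimizers $\delta U^*,\delta\hat U^*$ are well defined and unique) guarantees the argument is not vacuous. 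The proof is therefore just the four-line chain described above, and I would present it directly as such in Section~\ref{analysis}.
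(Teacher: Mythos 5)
Your proposal is correct and is essentially identical to the paper's own proof: both use the standard two-sided argument combining the uniform bound from Lemma~\ref{L1} with the optimality inequalities $\delta \hat{J}(\delta \hat{U}^*)\leq\delta \hat{J}(\delta U^*)$ and $\delta J(\delta U^*)\leq\delta J(\delta \hat{U}^*)$, differing only in presentation (you chain inequalities where the paper writes the telescoping decompositions explicitly). No gaps.
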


\begin{proof}
    See Sec~\ref{l2proof}.
\end{proof}

Now, given Lemmas \ref{L1} and \ref{L2}, we can discuss the comparisons between the corresponding optimal control. \\

\begin{lemma}\label{L3}
    The optimal control sequences $\delta U^*$ and $\delta \hat{U}^*$ obtained from solving the \ref{eq:FO-LQR} and \ref{eq:RO-LQR} satisfy: 
    \begin{equation*}
        ||\delta U^* -\delta \hat{U}^*||\leq \delta,
    \end{equation*}
    where $\delta = \sqrt{\frac{2\Bar{C}_1 \epsilon}{\Bar{\sigma}}}$ (Fig.~\ref{fig:ctrl_same}). 
\end{lemma}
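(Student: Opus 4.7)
The plan is to convert the function-value bounds of Lemmas~\ref{L1} and \ref{L2} into a norm bound on the minimizers by exploiting the strong convexity of $\delta J$ furnished by Assumption~\textbf{A\ref{A4}}. The intuition is the familiar one from convex optimization perturbation theory: a uniformly well-conditioned quadratic cannot admit two near-optimal candidates that are far apart in argument. Concretely, $\delta\hat U^*$ will be shown to be an almost-minimizer of $\delta J$ (in objective value) via Lemmas~\ref{L1}--\ref{L2}, and then curvature will amplify that closeness in value into closeness in argument.

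I would proceed in three steps. First, since $\delta J$ is quadratic in $\delta U$ with Hessian lower-bounded by $\bar\sigma I$ (Assumption~\textbf{A\ref{A4}}), and since the first-order optimality condition $\nabla \delta J(\delta U^*)=0$ holds at the full-order minimizer, a Taylor expansion around $\delta U^*$ yields the one-sided quadratic bound
\begin{equation*}
  \delta J(\delta \hat U^*) - \delta J(\delta U^*)\;\geq\;\frac{\bar\sigma}{2}\,\|\delta \hat U^* - \delta U^*\|^2.
\end{equation*}
Second, I would upper-bound the suboptimality gap on the left by telescoping through the reduced-order objective,
\begin{equation*}
  \delta J(\delta\hat U^*)-\delta J(\delta U^*)\;=\;\bigl[\delta J(\delta\hat U^*)-\delta\hat J(\delta\hat U^*)\bigr]+\bigl[\delta\hat J(\delta\hat U^*)-\delta J(\delta U^*)\bigr],
\end{equation*}
controlling the first bracket by $\bar C_1\epsilon$ via Lemma~\ref{L1} applied at the single point $\delta\hat U^*$, and the second bracket by $\bar C_1\epsilon$ via Lemma~\ref{L2}. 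Third, I would combine the upper and lower bounds and solve the resulting scalar inequality for $\|\delta\hat U^*-\delta U^*\|$, which yields the claimed $\sqrt{2\bar C_1\epsilon/\bar\sigma}$.

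The main obstacle is the constant-chasing in the second step so that the numerator inside the square root comes out as exactly $2\bar C_1\epsilon$. A naive, symmetric splitting of the suboptimality gap produces an extra factor of two, and to avoid this waste one must exploit the reduced-order optimality $\delta\hat J(\delta\hat U^*)\le \delta\hat J(\delta U^*)$ in combination with a single application of Lemma~\ref{L1} at $\delta U^*$, rather than applying Lemma~\ref{L1} independently at both endpoints. Beyond this bookkeeping, no additional structural ingredient is required: Assumption~\textbf{A\ref{A4}} supplies the curvature, Lemma~\ref{L1} the pointwise closeness of the two objectives, first-order optimality at $\delta U^*$ eliminates the linear term in the Taylor expansion, and Lemma~\ref{L2} caps the cross-term between the two minimizers.
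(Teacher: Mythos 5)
Your proposal follows essentially the same route as the paper: both arguments first bound the full-order value gap $|\delta J(\delta U^*)-\delta J(\delta \hat{U}^*)|$ by $2\Bar{C}_1\epsilon$ using Lemmas~\ref{L1} and~\ref{L2} (your telescoping through $\delta\hat J$ and the paper's reverse triangle inequality are interchangeable), and then convert that value gap into a norm bound by exploiting the quadratic structure of $\delta J$, the first-order optimality condition at $\delta U^*$, and the lower bound $\Bar{\sigma}$ on the Hessian from Assumption~\textbf{A\ref{A4}}.

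The one place your write-up goes astray is the constant-chasing you flag at the end. The value gap cannot be pushed below $2\Bar{C}_1\epsilon$ by the trick you describe: even after inserting $\delta\hat J(\delta \hat{U}^*)\le\delta\hat J(\delta U^*)$, closing the telescope still requires Lemma~\ref{L1} at both endpoints, so your (correct) strong-convexity inequality leaves you with
\begin{equation*}
\frac{\Bar{\sigma}}{2}\,\|\delta U^*-\delta \hat{U}^*\|^2\;\le\;2\Bar{C}_1\epsilon,
\qquad\text{i.e.}\qquad
\|\delta U^*-\delta \hat{U}^*\|\;\le\;\sqrt{4\Bar{C}_1\epsilon/\Bar{\sigma}}\;=\;\sqrt{2}\,\delta .
\end{equation*}
The paper lands on exactly $\sqrt{2\Bar{C}_1\epsilon/\Bar{\sigma}}$ only because its second-order expansion of the quadratic $\delta J$ omits the factor $\tfrac12$ in front of $(\delta U^R)^T\,\nabla^2\delta J\,\delta U^R$; with the Hessian as defined in Assumption~\textbf{A\ref{A4}}, your version of the curvature inequality is the correct one, and the discrepancy is a benign $\sqrt{2}$ in the definition of $\delta$ that propagates harmlessly through Lemma~\ref{L4}. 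So: same approach, sound in substance, but do not expect the bookkeeping you propose to recover the paper's stated constant.
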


\begin{proof}
    See Sec~\ref{l3proof}.
\end{proof}

Thus, we know that, under the specified assumptions, the solution to the \ref{eq:FO-LQR} and \ref{eq:RO-LQR} are close to each other for each iteration. 

\begin{figure}
\centering
\transparent{1.0}
\def\svgwidth{1\columnwidth}
\begingroup%
  \makeatletter%
  \providecommand\color[2][]{%
    \errmessage{(Inkscape) Color is used for the text in Inkscape, but the package 'color.sty' is not loaded}%
    \renewcommand\color[2][]{}%
  }%
  \providecommand\transparent[1]{%
    \errmessage{(Inkscape) Transparency is used (non-zero) for the text in Inkscape, but the package 'transparent.sty' is not loaded}%
    \renewcommand\transparent[1]{}%
  }%
  \providecommand\rotatebox[2]{#2}%
  \newcommand*\fsize{\dimexpr\f@size pt\relax}%
  \newcommand*\lineheight[1]{\fontsize{\fsize}{#1\fsize}\selectfont}%
  \ifx\svgwidth\undefined%
    \setlength{\unitlength}{443.42470276bp}%
    \ifx\svgscale\undefined%
      \relax%
    \else%
      \setlength{\unitlength}{\unitlength * \real{\svgscale}}%
    \fi%
  \else%
    \setlength{\unitlength}{\svgwidth}%
  \fi%
  \global\let\svgwidth\undefined%
  \global\let\svgscale\undefined%
  \makeatother%
  \begin{picture}(1,0.77945579)%
    \lineheight{1}%
    \setlength\tabcolsep{0pt}%
    \put(0,0){\includegraphics[width=\unitlength,page=1]{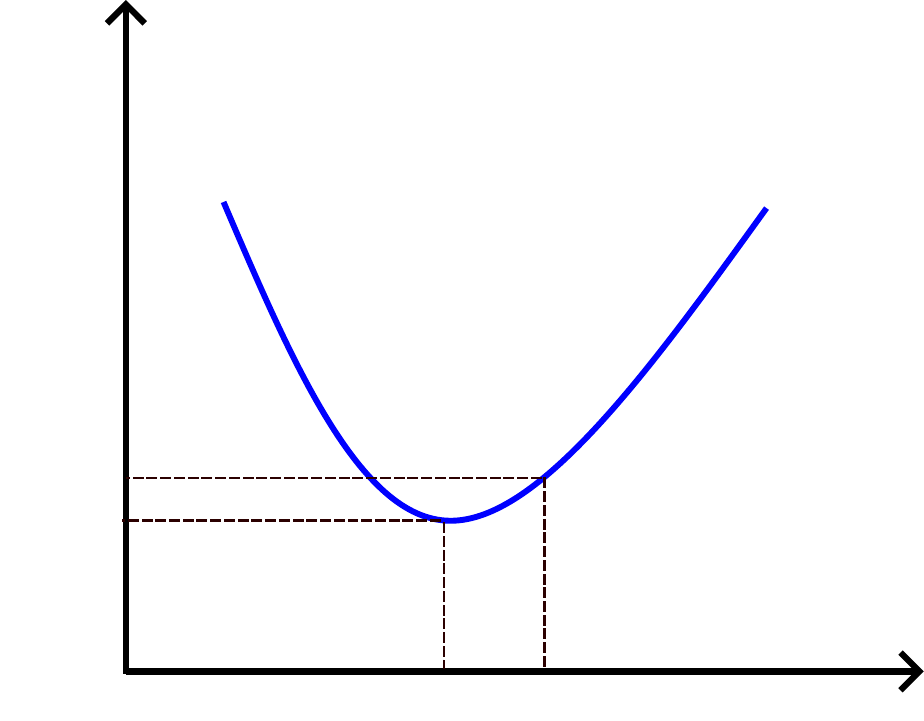}}%
    \put(0.83259928,0.57916925){\color[rgb]{0,0,1}\transparent{0}\makebox(0,0)[lt]{\lineheight{1.25}\smash{\begin{tabular}[t]{l}$\delta J$\end{tabular}}}}%
    \put(0.42022349,0.00421693){\color[rgb]{0.15686275,0.04313725,0.04313725}\transparent{0}\makebox(0,0)[lt]{\lineheight{1.25}\smash{\begin{tabular}[t]{l}$\delta U^*$\end{tabular}}}}%
    \put(0.55128482,0.00797461){\color[rgb]{0.15686275,0.04313725,0.04313725}\transparent{0}\makebox(0,0)[lt]{\lineheight{1.25}\smash{\begin{tabular}[t]{l}$\delta \hat{U}^*$\end{tabular}}}}%
    \put(0.0006789,0.2263815){\color[rgb]{0.15686275,0.04313725,0.04313725}\transparent{0}\makebox(0,0)[lt]{\lineheight{1.25}\smash{\begin{tabular}[t]{l}$2\bar{C}_1 \epsilon$\end{tabular}}}}%
    \put(0,0){\includegraphics[width=\unitlength,page=2]{fig_L3_2.pdf}}%
  \end{picture}%
\endgroup%
            
\caption{\small The solutions of the full and reduced order perturbed LQR problems are close to each other, under the specified assumptions.}
\label{fig:ctrl_same}
\end{figure}

\subsection{Convergence of the Reduced Order Formulation}
\begin{lemma}\label{L4}
     Let $S_\infty = \{\delta U: ||[\nabla^2 J]^{-1}\nabla J||\leq \delta \}$, then for any $\delta U\notin S_\infty$, the \ref{eq:RO-LQR} will always move in a direction such that the cost decreases. 
    
\end{lemma}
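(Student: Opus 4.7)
The plan is to show that at any iterate outside $S_\infty$, the RO-LQR solution $\delta\hat U^*$ produces a strictly negative change in the perturbed cost $\delta J$, i.e.\ it is a cost-decreasing direction for the full-order problem. The engine of the argument is the strong convexity of $\delta J$ guaranteed by Assumption~\ref{A4}, combined with Lemmas~\ref{L1}--\ref{L3}: together these tell us that $\delta\hat U^*$ is a good approximation --- both in control-space and in cost value --- of the exact Newton step $\delta U^* = -[\nabla^2 J]^{-1}\nabla J$ that (FO-LQR) produces.

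First, I would exploit the fact that $\delta J$ is exactly quadratic in $\delta U$ with gradient $g=\nabla J$ and Hessian $H=\nabla^2 J$ evaluated at the nominal, so its minimum has the closed form $\delta J(\delta U^*) = -\tfrac12\, g^\top H^{-1} g$. Chaining Lemmas~\ref{L1} and \ref{L2} through a triangle inequality,
\[
|\delta J(\delta\hat U^*) - \delta J(\delta U^*)| \le |\delta J(\delta\hat U^*) - \delta\hat J(\delta\hat U^*)| + |\delta\hat J(\delta\hat U^*) - \delta J(\delta U^*)| \le 2\bar C_1\epsilon,
\]
yielding the basic estimate $\delta J(\delta\hat U^*) \le -\tfrac12\, g^\top H^{-1}g + 2\bar C_1\epsilon$. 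Next, I would convert the hypothesis $\delta U\notin S_\infty$ --- that is, $\|H^{-1}g\|>\delta$ --- into a quantitative descent bound via strong convexity, $g^\top H^{-1} g \ge \bar\sigma\,\|H^{-1}g\|^2$, and the identity $\delta^2 = 2\bar C_1\epsilon/\bar\sigma$ from Lemma~\ref{L3}. This gives $\tfrac12\, g^\top H^{-1}g > \bar C_1\epsilon$, and, up to a harmless constant tightening of the $S_\infty$ threshold (e.g.\ by a factor of $\sqrt 2$), the Newton-step descent strictly dominates the $O(\bar C_1\epsilon)$ reduced-basis slack, so $\delta J(\delta\hat U^*)<0$. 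An equivalent first-order argument expands $g^\top\delta\hat U^* = -g^\top H^{-1}g + g^\top(\delta\hat U^* - \delta U^*)$ and bounds the error term by $\|g\|\,\delta$ via Cauchy--Schwarz and Lemma~\ref{L3}; being outside $S_\infty$ again ensures the first term dominates and $\delta\hat U^*$ is a descent direction for $J$.

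The main obstacle I anticipate is purely bookkeeping: carefully aligning the constants $\bar C_1,\bar\sigma,\epsilon,\delta$ so that the guaranteed Newton-step descent strictly dominates the approximation slack introduced by projecting onto the reduced POD basis. Any mismatch can be absorbed into the radius of $S_\infty$ by a universal multiplicative constant without disturbing the qualitative conclusion --- outside a compact neighborhood of the stationarity-like set $S_\infty$, every reduced-order iterate strictly decreases the full-order cost. This monotone descent property outside $S_\infty$ is precisely what the subsequent convergence analysis needs in order to conclude that the algorithm is absorbed into (a small inflation of) $S_\infty$ in finitely many iterations.
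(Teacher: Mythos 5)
Your proof is essentially correct, but it takes a genuinely different route from the paper. The paper argues in \emph{direction} space: it writes the reduced-order update as the exact Newton step plus an error, $\hat{S}_t=-([\nabla^2J]^{-1}\nabla J+w)$ with $\|w\|\leq\delta$ from Lemma~\ref{L3}, invokes a line-search sufficient-decrease condition $J^{k+1}-J^k\leq-\eta\rho\,S_t^T\hat{S}_t$, and uses Cauchy--Schwarz to conclude $S_t^T\hat{S}_t>0$ whenever $\|[\nabla^2J]^{-1}\nabla J\|>\|w\|$; this yields the quantitative bound $J^{k+1}-J^k\leq-\beta(\|[\nabla^2J]^{-1}\nabla J\|-\|w\|)$ that is reused later as $\beta_2$ in the convergence discussion. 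You instead argue in \emph{value} space: you use the closed-form minimum $-\tfrac12 g^TH^{-1}g$ of the quadratic $\delta J$, chain Lemmas~\ref{L1} and~\ref{L2} to get $|\delta J(\delta\hat{U}^*)-\delta J(\delta U^*)|\leq 2\bar{C}_1\epsilon$, and use the Hessian lower bound $\bar\sigma$ together with $\delta^2=2\bar{C}_1\epsilon/\bar\sigma$ to show the Newton decrease dominates the reduced-basis slack. Your version buys a direct descent statement on the quadratic model without needing the line-search constant $\rho$ or the assumption that the realized decrease is proportional to $S_t^T\hat{S}_t$; the price is that the threshold must be inflated to roughly $\sqrt{2}\,\delta$ (as you note), whereas the paper's argument works with the radius $\delta$ exactly. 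One caution: your ``equivalent first-order argument'' ($g^T\delta\hat{U}^*=-g^TH^{-1}g+g^T(\delta\hat{U}^*-\delta U^*)$ with the error bounded by $\|g\|\delta$) is not actually equivalent --- making the first term dominate there requires $g^TH^{-1}g>\|g\|\delta$, which compares $\|g\|/\sigma_{\max}(H)$ against $\delta$ rather than $\|H^{-1}g\|$ against $\delta$, and so picks up a condition-number factor that the hypothesis $\delta U\notin S_\infty$ does not control. The paper avoids this by pairing $\hat{S}_t$ with the Newton direction $S_t$ rather than with the raw gradient; your main value-based argument avoids it too, so simply drop or qualify that aside.
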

\begin{proof}
    See Sec~\ref{l4proof}.
\end{proof}

We now study the behavior of the optimization as it proceeds. Let $S_\infty$ be enclosed between two sub-level sets of the cost function, say $\bar{S}$ and $\underline{S}$, such that $S_\infty$ completely encloses $\underline{S}$, while being completely enclosed by $\Bar{S}$ (Fig.~\ref{S_infty_compact}). Let the costs for any control policy in the sub-level sets $\bar{S}$ and $\underline{S}$ be upper-bounded by the costs $\bar{J}$ and $\underline{J}$ respectively.  

\begin{figure}
\centering
\def\svgwidth{0.6\columnwidth}
\begingroup%
  \makeatletter%
  \providecommand\color[2][]{%
    \errmessage{(Inkscape) Color is used for the text in Inkscape, but the package 'color.sty' is not loaded}%
    \renewcommand\color[2][]{}%
  }%
  \providecommand\transparent[1]{%
    \errmessage{(Inkscape) Transparency is used (non-zero) for the text in Inkscape, but the package 'transparent.sty' is not loaded}%
    \renewcommand\transparent[1]{}%
  }%
  \providecommand\rotatebox[2]{#2}%
  \newcommand*\fsize{\dimexpr\f@size pt\relax}%
  \newcommand*\lineheight[1]{\fontsize{\fsize}{#1\fsize}\selectfont}%
  \ifx\svgwidth\undefined%
    \setlength{\unitlength}{382.23364212bp}%
    \ifx\svgscale\undefined%
      \relax%
    \else%
      \setlength{\unitlength}{\unitlength * \real{\svgscale}}%
    \fi%
  \else%
    \setlength{\unitlength}{\svgwidth}%
  \fi%
  \global\let\svgwidth\undefined%
  \global\let\svgscale\undefined%
  \makeatother%
  \begin{picture}(1,1.09754267)%
    \lineheight{1}%
    \setlength\tabcolsep{0pt}%
    \put(0,0){\includegraphics[width=\unitlength,page=1]{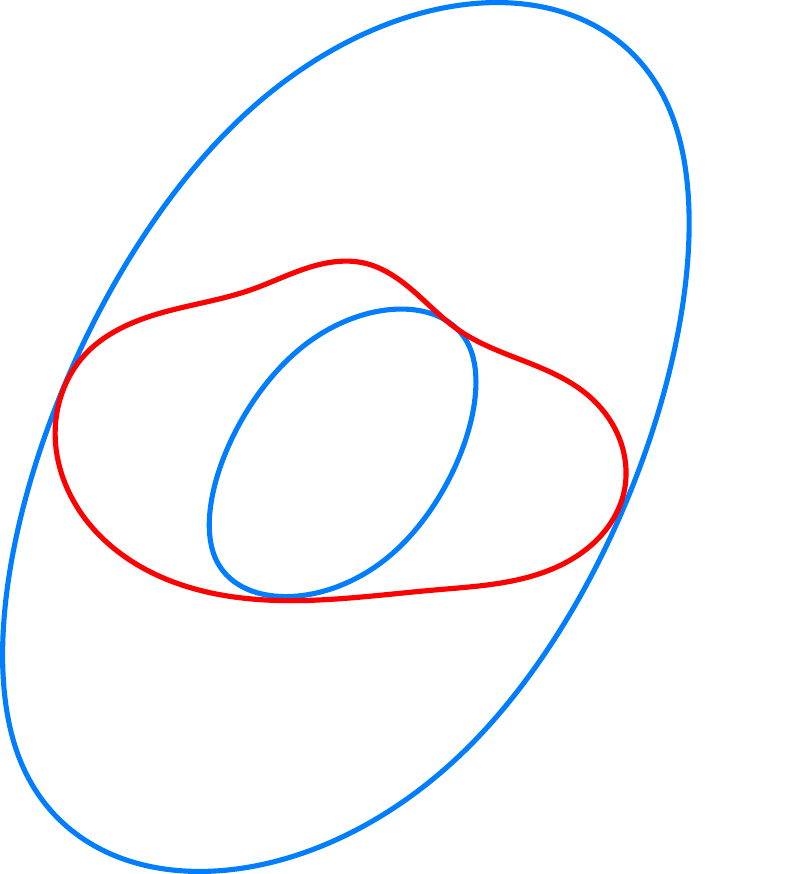}}%
    \put(0.87541297,0.81381222){\color[rgb]{0,0.49411765,1}\makebox(0,0)[lt]{\lineheight{1.25}\smash{\begin{tabular}[t]{l}$\bar{S}$\end{tabular}}}}%
    \put(0.55022659,0.43756512){\color[rgb]{0,0.49411765,1}\makebox(0,0)[lt]{\lineheight{1.25}\smash{\begin{tabular}[t]{l}$\underline{S}$\end{tabular}}}}%
    \put(0.69186794,0.64028897){\color[rgb]{1,0,0}\makebox(0,0)[lt]{\lineheight{1.25}\smash{\begin{tabular}[t]{l}$S_\infty$\end{tabular}}}}%
  \end{picture}%
\endgroup%

\caption{\small The set $S_\infty$ is compact, and enclosed between sub-level sets $\underline{S}$ and $\Bar{S}$.}
\label{S_infty_compact}
\end{figure}


We introduce the following assumption to ensure that the algorithm never increases the cost:

\begin{assumption}{\textbf{(A4)}}\label{A5}
    We enforce the cost to be non-increasing through all iterations of the algorithm, \textit{i.e.}, $J^{k+1}-J^{k}\leq 0, \, \forall \, k$.
\end{assumption}

Let us assume, for some $(X_0,U_0)$ at $t=0$, the initial cost $J_0>\Bar{J}$. But we know that the cost is always non-negative. Since $J$ is monotonically decreasing, then for some $t=k$, 
\begin{align*}
    J^k \leq \Bar{J} \text{ s.t. } U^{(k)}\in S_\infty.
\end{align*}

From Eq.~\ref{eq:monotonic},
\begin{equation*}
    J^{k+1}-J^{k} \leq \beta_2,
\end{equation*}

where $\beta_2=\beta (||[\nabla^2 J]^{-1}\nabla J||-||w||)\geq 0$. This results in three possible cases (Fig.~\ref{fig:Jk_threecases}): 
\begin{figure}
\centering
\def\svgwidth{0.8\columnwidth}
\begingroup%
  \makeatletter%
  \providecommand\color[2][]{%
    \errmessage{(Inkscape) Color is used for the text in Inkscape, but the package 'color.sty' is not loaded}%
    \renewcommand\color[2][]{}%
  }%
  \providecommand\transparent[1]{%
    \errmessage{(Inkscape) Transparency is used (non-zero) for the text in Inkscape, but the package 'transparent.sty' is not loaded}%
    \renewcommand\transparent[1]{}%
  }%
  \providecommand\rotatebox[2]{#2}%
  \newcommand*\fsize{\dimexpr\f@size pt\relax}%
  \newcommand*\lineheight[1]{\fontsize{\fsize}{#1\fsize}\selectfont}%
  \ifx\svgwidth\undefined%
    \setlength{\unitlength}{425.16834989bp}%
    \ifx\svgscale\undefined%
      \relax%
    \else%
      \setlength{\unitlength}{\unitlength * \real{\svgscale}}%
    \fi%
  \else%
    \setlength{\unitlength}{\svgwidth}%
  \fi%
  \global\let\svgwidth\undefined%
  \global\let\svgscale\undefined%
  \makeatother%
  \begin{picture}(1,0.9867097)%
    \lineheight{1}%
    \setlength\tabcolsep{0pt}%
    \put(0,0){\includegraphics[width=\unitlength,page=1]{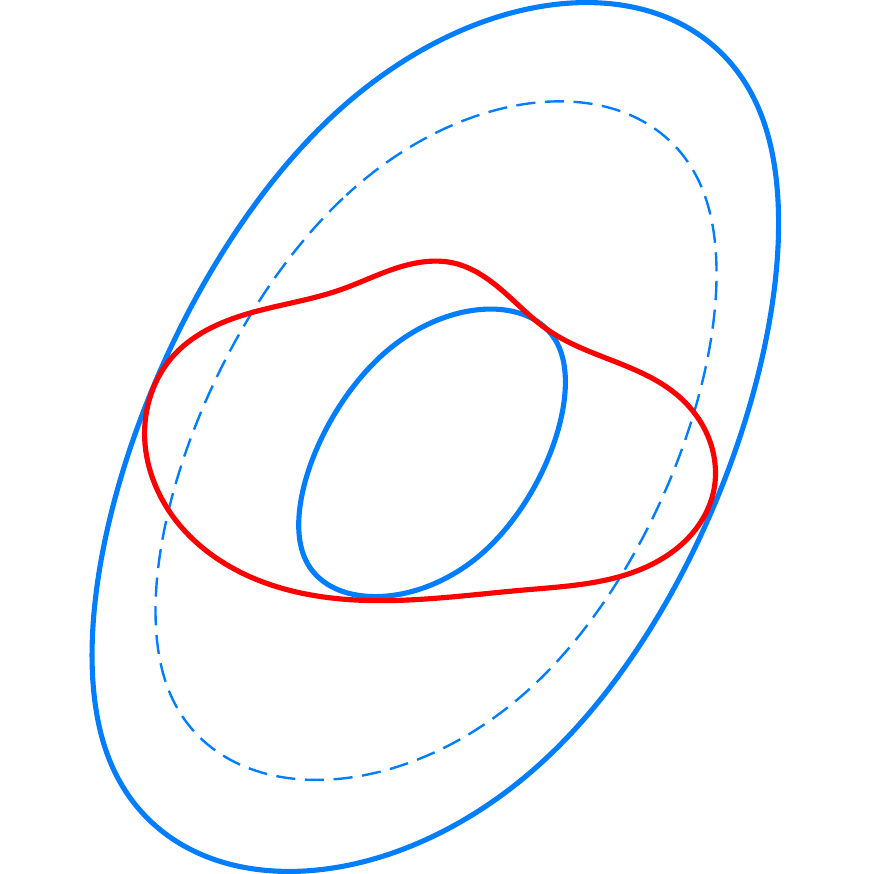}}%
    \put(0.88799412,0.73163115){\color[rgb]{0,0.49411765,1}\makebox(0,0)[lt]{\lineheight{1.25}\smash{\begin{tabular}[t]{l}$\bar{S}$\end{tabular}}}}%
    \put(0.59564599,0.39337856){\color[rgb]{0,0.49411765,1}\makebox(0,0)[lt]{\lineheight{1.25}\smash{\begin{tabular}[t]{l}$\underline{S}$\end{tabular}}}}%
    \put(0.72298399,0.57563077){\color[rgb]{1,0,0}\makebox(0,0)[lt]{\lineheight{1.25}\smash{\begin{tabular}[t]{l}$S_\infty$\end{tabular}}}}%
    \put(0,0){\includegraphics[width=\unitlength,page=2]{convergence_s_inf.pdf}}%
    \put(0.47669775,0.49189398){\color[rgb]{0,0,0}\transparent{0}\makebox(0,0)[lt]{\lineheight{1.25}\smash{\begin{tabular}[t]{l}$J^*$\end{tabular}}}}%
    \put(0,0){\includegraphics[width=\unitlength,page=3]{convergence_s_inf.pdf}}%
  \end{picture}%
\endgroup%
            
\caption{\small The cost lying within the sublevel set $\Bar{S}$, with the trajectory being in $S_\infty$, results in three possible cases: the subsequent estimate may either jump out of $S_\infty$ but still descend, stay in $S_\infty$ and descend, or terminate at the same cost. The algorithm's progress after it hits $S_{\infty}$ is some sequence of these three cases till the algorithm terminates due to the first case or reaches the set $\underline{S}$, after which it cannot leave $S_{\infty}$. Thus, the algorithm will eventually converge to the set $S_\infty$ and stay there, however, this is the strongest guarantee of convergence. Thus, that the reduced algorithm converges to $S_\infty$ is all we can guarantee. This is the price we pay for the computational efficiency obtained from model reduction.}
\label{fig:Jk_threecases}
\end{figure}

\textbf{\textit{Case 1 :}} $\beta_2 \geq J^{k+1}-J^k \geq 0$

Due to A\ref{A5}, this is infeasible and thus, the algorithm terminates at this cost, and remains within the set $S_\infty$.

\textbf{\textit{Case 2 :}} $J^{k+1}-J^k <0$ and $U^{k+1}\in S_\infty$.

Thus, despite being in $S_\infty$, the cost may still descend. 

\textbf{\textit{Case 3 :}} $J^{k+1}-J^k <0$, but $U^{k+1}\notin S_\infty$, \textit{i.e.}, the algorithm jumps out of $S_\infty$.
Then, from L\ref{L4}, for any $t\geq k+1$, $J^{t+1}-J^t<0$ until $U^t\in S_\infty$, i.e., the algorithm will return to the set $S_{\infty}$. 

At the next iteration, Case 2 or Case 3 will either result in the iteration terminating (Case 1), the cost staying within $S_\infty$ and descending (Case 2), or jumping out again, and monotonically decreasing till it re-enters $S_\infty$ (Case 3). This process can only continue until $J\leq \underline{J}$ at which point Case 3 is no longer feasible and the algorithm remains within $S_{\infty}$.









\begin{remark}
    Upon reaching the sub-level set $\underline{S}$, the iteration may either terminate, or keep descending until it reaches the optimal trajectory, depending on the gradient and errors. Unfortunately, there are no guarantees within this region, and, thus, the best we can say is that the algorithm converges to some point within the set $\underline{S}$. 
\end{remark}

            
The above development can be summarized in the following result. 
\begin{theorem}    
    Under assumptions A\ref{A1}, A\ref{A3}, A\ref{A4} and A\ref{A5}, the reduced order formulation of the ILQR is guaranteed to converge to the set $S_\infty$.
\end{theorem}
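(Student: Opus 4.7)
The plan is to chain Lemma \ref{L4} together with Assumption A\ref{A5} and the sub-level set case analysis already set up above. The argument has two stages: first, show that from any initial iterate the algorithm must eventually enter the sub-level set $\bar{S}$ that contains $S_\infty$; second, show that once inside $\bar{S}$ the iterates remain trapped in $S_\infty$ in the limit, via an exhaustive case analysis on where consecutive iterates can lie.

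For the first stage, I would argue as follows. Suppose $\delta U^{(k)} \notin S_\infty$. By Lemma \ref{L4}, the reduced-order LQR step produces a descent direction for the full cost $J$, and by Assumption A\ref{A5} we only accept updates satisfying $J^{(k+1)} - J^{(k)} \le 0$, so in fact $J^{(k+1)} - J^{(k)} \le -\beta_2 < 0$ with $\beta_2 = \beta(\|[\nabla^2 J]^{-1}\nabla J\| - \|w\|) > 0$ whenever we are strictly outside $S_\infty$. Since $J$ is bounded below by $0$ and the decrements are bounded away from zero on $\bar{S}^{\,c}$ (this uses compactness of the region $\bar{S}\setminus S_\infty$ together with continuity of the gradient terms in Lemma \ref{L4}), the iterates must enter $\bar{S}$ in finitely many steps.

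For the second stage, once $\delta U^{(k)} \in \bar{S}$ I would invoke the three-case split already described in the excerpt. In Case 1 the algorithm terminates with $\delta U^{(k)} \in S_\infty$, so we are done. In Case 2 the iterate remains in $S_\infty$ and the cost decreases; subsequent iterations continue the analysis within $S_\infty$. In Case 3 the iterate briefly exits $S_\infty$, but then Lemma \ref{L4} applies again and together with A\ref{A5} forces a strict monotone descent of $J$ at every subsequent step until re-entry into $S_\infty$. Crucially, each excursion-plus-return cycle strictly decreases $J$, and because the cost is bounded below and only finitely much descent is available within the band between $\bar{J}$ and $\underline{J}$, only finitely many Case 3 cycles can occur. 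After that, the iterate lies in the sub-level set $\underline{S} \subset S_\infty$, where Case 3 is no longer geometrically possible (leaving $S_\infty$ would violate $J^{(k+1)}\le \underline{J}$ coupled with A\ref{A5}), so the iterate is trapped in $S_\infty$ thereafter. Combining the two stages yields convergence to $S_\infty$.

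The main obstacle I anticipate is the Case 3 argument: one must justify both that the algorithm reliably re-enters $S_\infty$ after an excursion and that the total excursion ``budget'' is finite. The first part follows from Lemma \ref{L4} applied iteratively outside $S_\infty$, together with A\ref{A5}, but making the descent quantitative enough to rule out an infinite sequence of arbitrarily small excursions requires a uniform lower bound on the per-step decrease outside $S_\infty$; this is where Assumption A\ref{A3} on $\sigma_{\min}$ of the Hessian, and the compactness of the annulus $\bar{S}\setminus \underline{S}$, become essential. Once that uniform descent is in hand, summability of the decreases against the finite gap $\bar{J}-\underline{J}$ closes the argument.
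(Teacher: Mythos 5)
Your proposal follows essentially the same route as the paper: Lemma \ref{L4} gives descent whenever the iterate is outside $S_\infty$, Assumption A\ref{A5} rules out cost increases, and the three-case analysis on the sandwiching sub-level sets $\bar{S}$ and $\underline{S}$ shows the iterates are eventually trapped once $J \le \underline{J}$. Your additional care about making the per-step decrease uniformly bounded away from zero outside $S_\infty$ (via compactness of $\bar{S}\setminus S_\infty$) is a reasonable tightening of a step the paper treats more informally, but it does not change the argument's structure.
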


\begin{remark}
    Note that the bounds computed above can be highly conservative, owing to the application of worst case bounds in Lemmas 1-3. In practice, the bound $\delta = \sqrt{\frac{2\bar{C}_1\epsilon}{\bar{\sigma}}}$, parametrizing the algorithm's limit set $S_{\infty}$ is much smaller, and consequently, the reduced order solution is much closer to the true solution as shall be seen from our empirical results later in the paper.
\end{remark}

\section{Empirical Results}\label{sec5}


\subsection{Optimal Control Problems Analyzed} \label{applications}

This section presents the partial differential equations (PDEs) used to evaluate the performance of the reduced-order ILQR against the standard ILQR method applied to a full-order, high-degree of freedom (DOF) system.

\subsubsection{1D Viscous Burger's Equation} Consider the viscous Burger's Equation given by: \begin{equation*} \frac{\partial u}{\partial t} + u \frac{\partial u}{\partial x} = \nu \frac{\partial^2 u}{\partial x^2}, \end{equation*} with external control inputs at the boundaries corresponding to blowing and suction, defined as $u(0,t)=U_1(t)$ and $u(L,t)=U_2(t)$.

\subsubsection{Material Microstructure Evolution - Allen-Cahn and Cahn-Hilliard Equations} The dynamics of material microstructures can be described using two types of partial differential equations: the Allen-Cahn (A-C) equation \cite{ALLEN19791085}, which models the evolution of a non-conserved quantity, and the Cahn-Hilliard (C-H) equation \cite{Cahn1958}, which represents the evolution of a conserved quantity. The general form of the Allen-Cahn equation is: \begin{align} \frac{\partial \phi}{\partial t} = -M \left( \frac{\partial F}{\partial \phi} - \gamma \nabla^2 \phi \right), \end{align} while the Cahn-Hilliard equation is expressed as: \begin{align} \frac{\partial \phi}{\partial t} = \nabla \cdot M \nabla \left( \frac{\partial F}{\partial \phi} - \gamma \nabla^2 \phi \right). \end{align} In these equations, $\phi = \phi(x,t)$ is the order parameter, which represents the state of the system in control theory and is infinite-dimensional. It indicates the proportion of each phase within the material system. For the two-phase system examined in this study, $\phi = -1$ corresponds to one pure phase, while $\phi = 1$ denotes the other; values of $\phi \in (-1, 1)$ signify a mixed state at the boundary between the two pure phases. The parameter \textit{M} relates to the material's mobility and is assumed to be constant in this analysis. The energy function \textit{F} has a nonlinear dependence on $\phi$, and $\gamma$ is a gradient coefficient that governs the level of diffusion or the thickness of the boundary interface.

In this research, we utilize the following general form of the energy density function: \begin{align} F(\phi;T,h) = \phi^4 + T\phi^2 + h\phi. \end{align} Here, the parameters \textit{T} and \textit{h} are not determined by the A-C and C-H equations; instead, they can be set externally to any value and may vary spatially and temporally.

\subsection{Structure and Task}
We simulated the dynamics in Python, through calling an explicit, second-order solver subroutine in FORTRAN. The system and its tasks are defined as follows: 
\begin{figure}[!htpb]
\vspace{-0.5cm}
\begin{multicols}{3}
      \subfloat[Initial state]{\includegraphics[width=\linewidth]{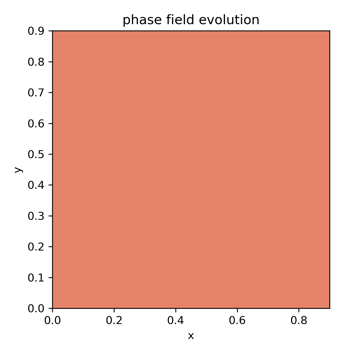}}    
      \subfloat[Goal state-I]{\includegraphics[width=1.1\linewidth]{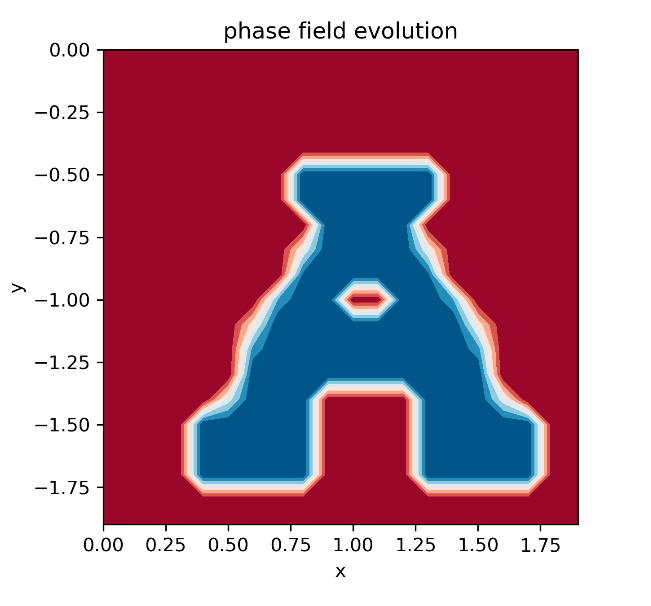}}
      \subfloat[Goal state-II]{\includegraphics[width=\linewidth]{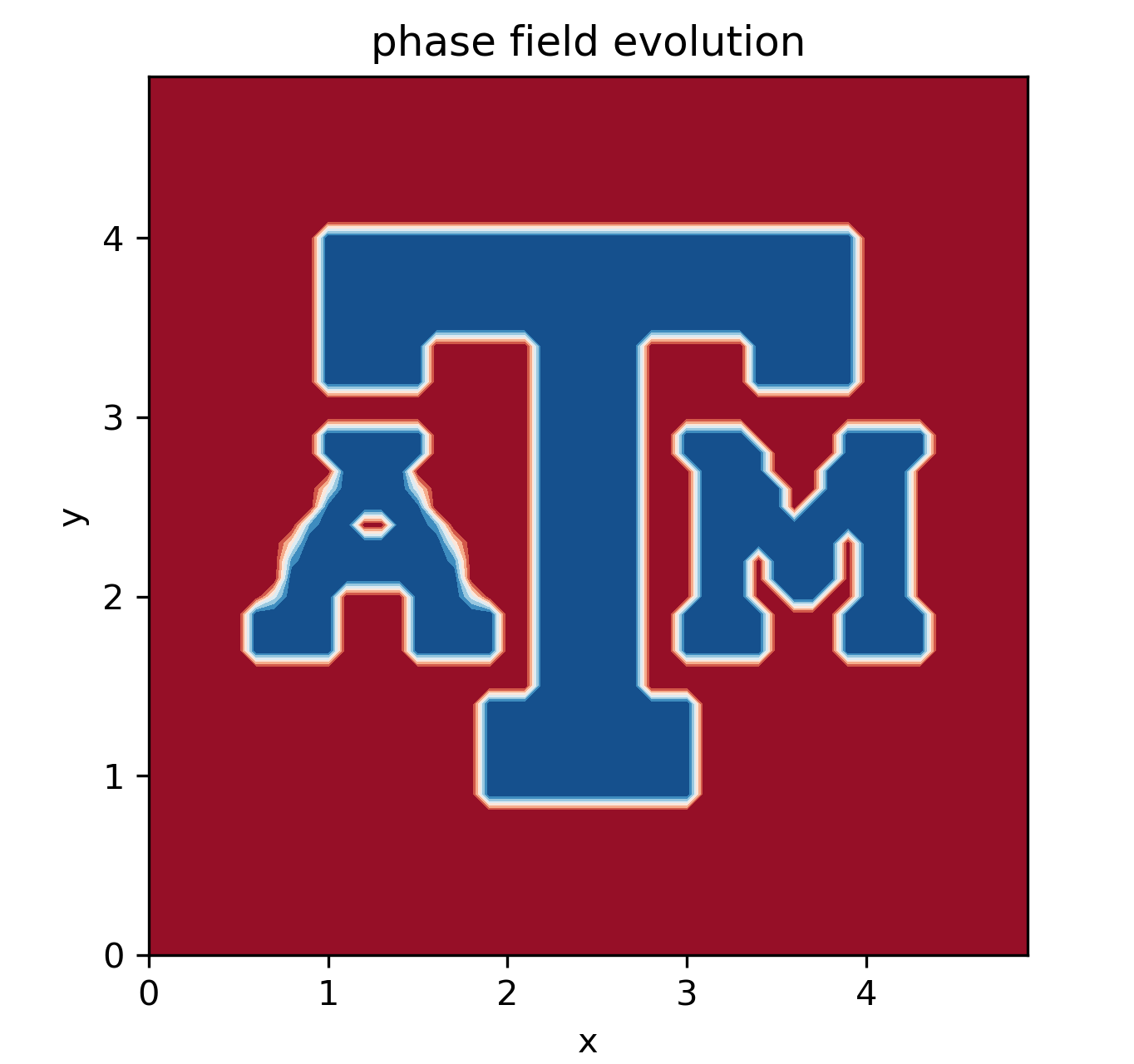}}

\end{multicols}
          

\caption{{Model simulated in Python}}
\label{figinit}
\end{figure}

\paragraph{Material Microstructure} The material model employed in this simulation comprises a two-dimensional grid with dimensions of $20 \times 20$ and $50 \times 50$, resulting in 400 and 2500 fully observable states, respectively. The order parameter at each grid point can take values within the range of $[-1, 1]$. The model is solved using an explicit, second-order central-difference scheme. Control inputs $(T, h)$ are applied such that all grid points targeting an order parameter of +1 receive the same control inputs, while a different set of $(T, h)$ is assigned to grid points with a target of -1. The objective is to manipulate the material dynamics to achieve a specified phase distribution, involving 2500 state variables and four control channels.

The initial and desired final states of the model are illustrated in Fig. \ref{figinit}.

\paragraph{1D Viscous Burger's Equation} The Burgers equation is represented as a one-dimensional grid discretized with a finite difference scheme, utilizing 100 equally spaced grid points that correspond to 100 fully observable states. This model is also solved using an explicit, second-order central-difference method. The system incorporates two control inputs for suction and blowing at its boundaries, leading to two control variables.

Starting with an initial velocity profile defined as $u(x,0) = \sin(\pi x)$ for $x \in [-1,1]$, the goal of the control task is to attain the target state $u(x,T) = -0.5$, which represents a constant velocity profile.

\begin{remark}
    Note that, for the purpose of the study, the dimensionality of the systems considered were kept within $O(10^3)$, which is smaller than realistic grid scales. This is done to facilitate a comparison with the full-order solution since the standard, full-order ILQR becomes intractable as we scale the dimensionality of the problem. Thus, in order to establish a benchmark with the ground truth (here, the solution from the ILQR), the experiments were restricted to dimensions where the full-order approach can still yield a feasible solution.
\end{remark}







\subsection{Training and Testing}
\subsubsection{Open-loop trajectory convergence}

The open-loop training plots illustrated in Fig.~\ref{d2c_2_training_testing} depict the cost curve throughout the training process. After the convergence of the cost curves, we acquire the optimal control sequence capable of guiding the systems to fulfill their tasks. The training parameters and results are summarized in (Tables \ref{d2c_comparison_table1}, \ref{params_comp}). The optimal trajectories achieved for the Allen-Cahn and Burgers PDEs are presented in Fig.~\ref{AC_traj} and ~\ref{bg_traj}, respectively.

\begin{figure*}[!htpb]
\centering
   

   

\begin{multicols}{3}
\ 
      \subfloat[Allen-Cahn Convergence\\(GS-II)]{\includegraphics[width=1\linewidth]{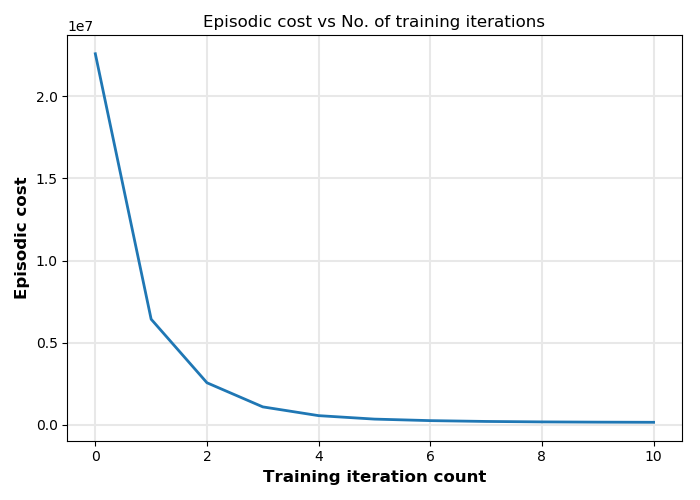}}
      \subfloat[Cahn-Hilliard Convergence\\(GS-I)]{\includegraphics[width=1\linewidth]{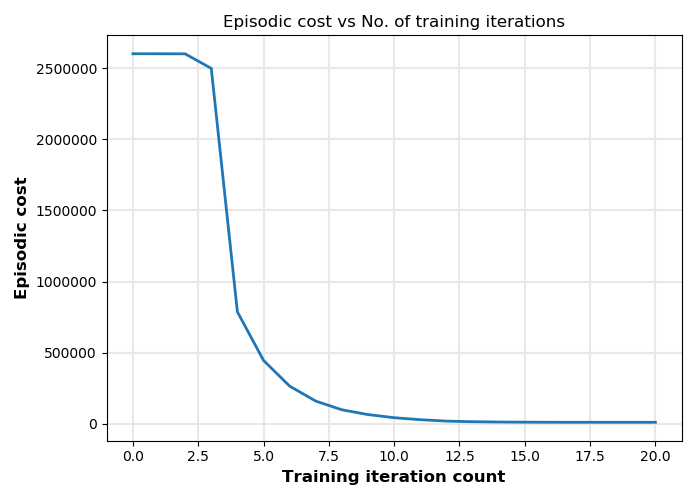}}
      \subfloat[Burgers Convergence]{\includegraphics[width=1\linewidth]{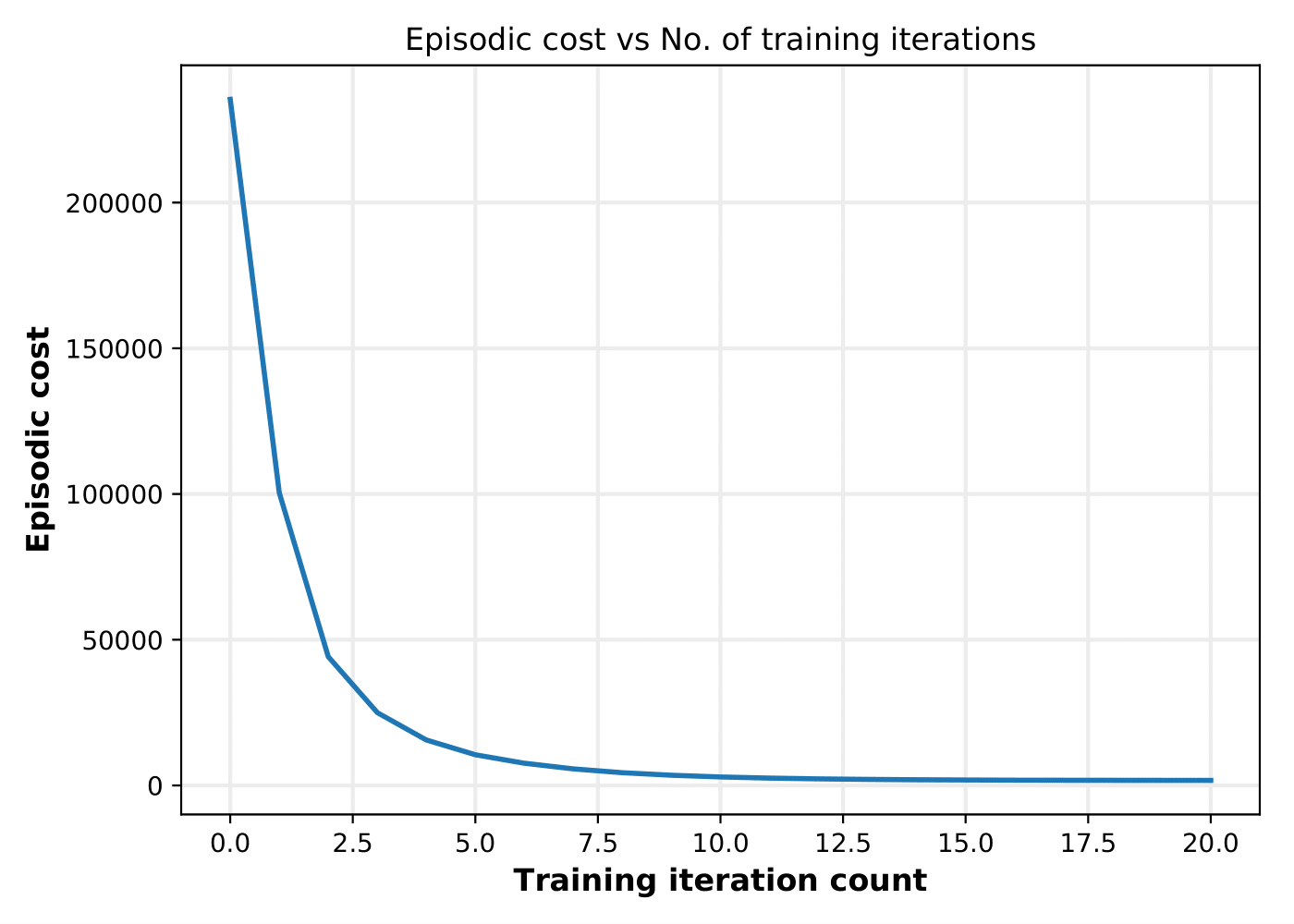}}\\

\end{multicols}

\vspace{-1cm}
\begin{multicols}{3}
   
      \subfloat[Allen-Cahn]{\includegraphics[width=1\linewidth]{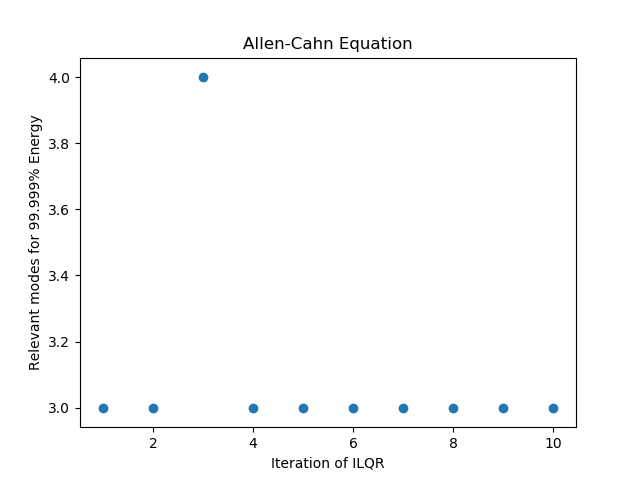}}    
      \subfloat[Cahn-Hilliard]{\includegraphics[width=1\linewidth]{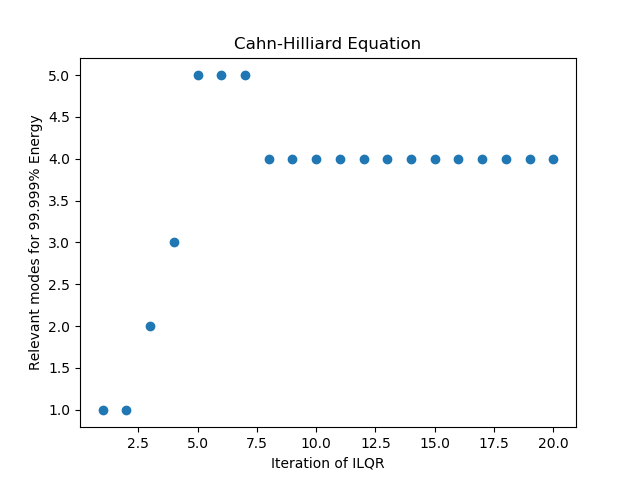}}
      \subfloat[Burgers]{\includegraphics[width=1\linewidth]{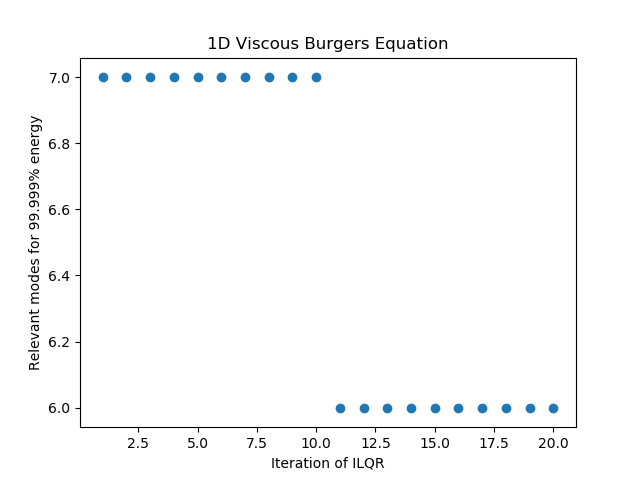}}\\

\end{multicols}

\caption{\small Convergence of Episodic cost for (a) Allen-Cahn, (b) Cahn Hilliard and (c) Viscous Burgers PDE (Top Row), and corresponding variation of number of modes for the 99.999\% energy cutoff with ILQR iteration (Bottom Row)}
\label{d2c_2_training_testing}
\end{figure*}

\begin{table}[h]
\parbox{\linewidth}{
\caption{Comparison of the training outcomes of ILQR with the reduced order ILQR.}
\centering
\begin{tabular}{|c|c|c|}
\hline
& \multicolumn{2}{|c|}{\bf Training time (in sec.)}\\\cline{2-3}
{\bf PDE system} &{\bf ILQR} &{\bf roILQR}\\
\hline
{\bf Allen-Cahn($50 \times 50$)} &2292.38 &53.21 \\
\hline
{\bf Cahn-Hilliard($20 \times 20$)}&5295.21  &209.83\\
\hline
{\bf Viscous Burgers($100 \times 1$)} &146.99 &13.01\\
\hline

\end{tabular}
\label{d2c_comparison_table1}
\begin{tablenotes}
\item[1] The open-loop training is run on a laptop with a 2-core CPU@2.9GHz and 12GB RAM. No multi-threading at this point.
\end{tablenotes}
}
\end{table}

\begin{table}[ht]
\vspace{-0.5cm}
\caption{Parameter size comparison for LTV system identification between ILQR and roILQR}
\label{parasize}
\centering
\vspace{0.1in}
\begin{threeparttable}
\setlength{\tabcolsep}{0.6mm}{
\begin{tabular}{|c|c|c|c|c|c|}
\hline
System& No. of  & No. of & No. of  &Dimensionality&Dimensionality\\
&steps&actuators&observed&for&for\\
&&($n_u$)&states($n_x$)& ILQR& roILQR\\
&&&&$ (n_x+n_u)$& $(l+n_u)$\\
\hline
Allen-Cahn& 10&4 &2500&2504 &7\\
\hline
Cahn-Hilliard& 10 &4& 400 &404&9\\
\hline
Burgers & 20&2 & 100&102 &9\\
\hline
\end{tabular}
}
\vspace{0.1in}
\end{threeparttable}
\label{params_comp}
\end{table}

\begin{figure}[!htpb]
\vspace{-0.3cm}
\centering
\begin{multicols}{2}
    \subfloat[Initial state]{\includegraphics[width=\linewidth]{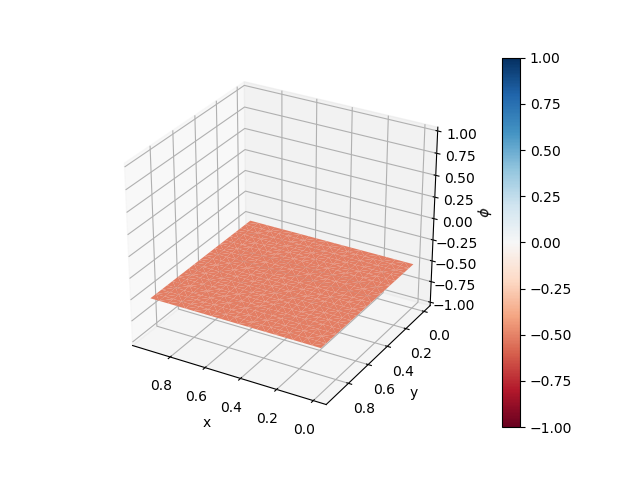}}
    \subfloat[Intermediate-I]{\includegraphics[width=\linewidth]{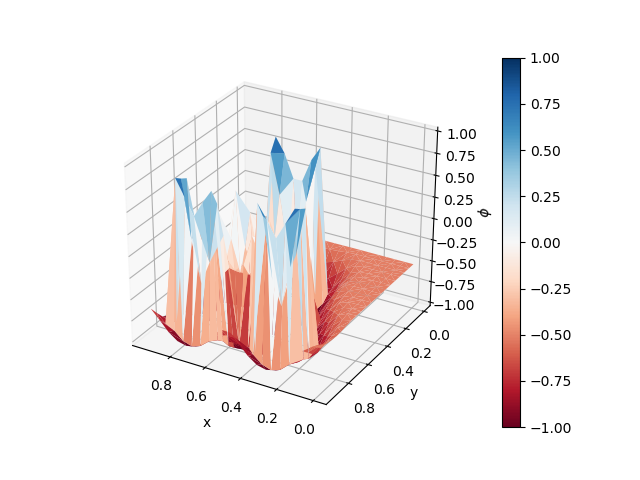}}\\

\end{multicols}

\begin{multicols}{2}
      \subfloat[Intermediate-II]{\includegraphics[width=\linewidth]{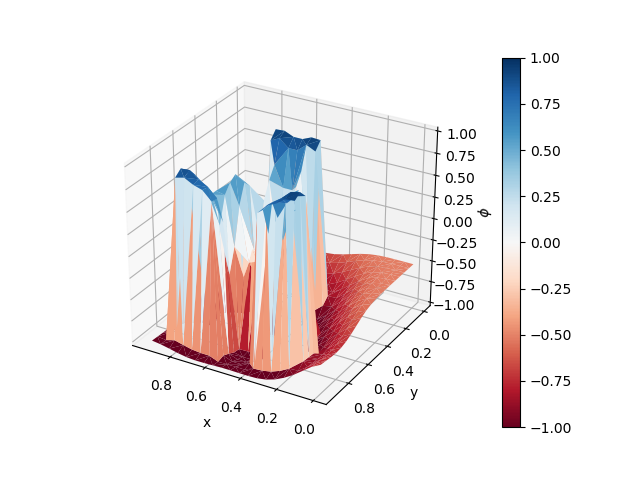}}
      \subfloat[Final State]{\includegraphics[width=\linewidth]{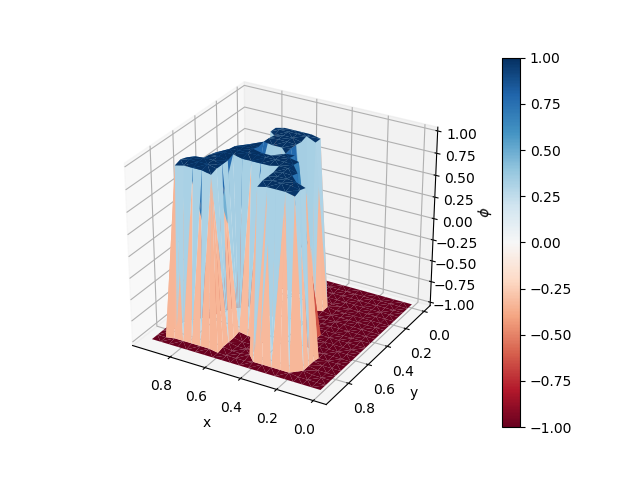}}

\end{multicols}

   

      
\caption{Optimal trajectory for the Cahn-Hilliard PDE.}
\label{AC_traj}
\end{figure}

\begin{figure}
   
\includegraphics[width=1\linewidth]{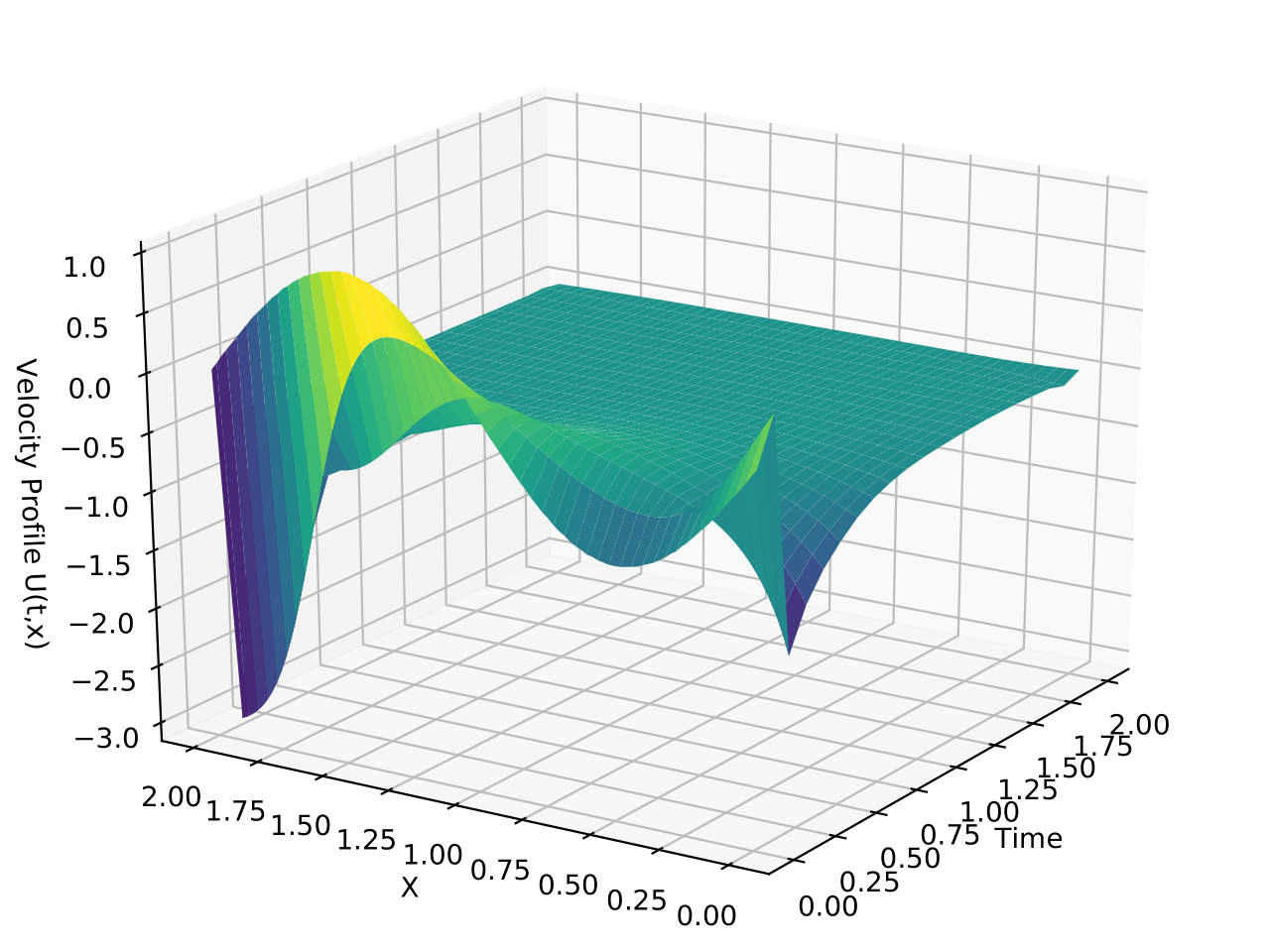}


\caption{\small Optimal trajectory for the Burgers PDE.}
\label{bg_traj}
\end{figure}

\subsubsection{Repeatability/Variance in Training}
The graphs presented in Fig. \ref{d2c_2_conv} illustrate the average and standard deviation of the cost curves based on various initial guesses, computed over 100 iterations. This analysis reveals that the algorithm consistently converges to the same minima.

\subsubsection{Quality of the solution}    
The plots in Fig. \ref{d2c_2_comp} compare the iterations of the full-scale ILQR vs the proposed reduced-order modification. 
We observe that the RO-ILQR converges to a solution with a cost within $14\%$ of the true optimal. This is in agreement with the convergence analysis we develop in Section \ref{analysis}, wherein we are guaranteed to converge to the set $S_\infty$. It is observed that this convergence bound is, in practice, tighter than the conservative estimate provided by the mathematical analysis. Moreover, the RO-ILQR approach is far cheaper computationally when compared to the full order ILQR. However, we also note that in the case of Burger's equation, the reduced order solution is almost of the same quality of the full order solution.

\begin{figure}[!htpb]
\vspace{0.0cm}
\centering

\begin{multicols}{2}
   
      \subfloat[Allen-Cahn (GS-II)]{\includegraphics[width=1\linewidth]{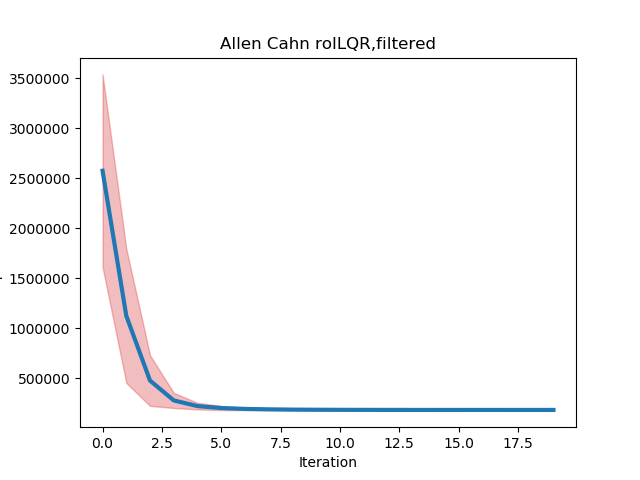}}
      \subfloat[Cahn-Hilliard (GS-I)]{\includegraphics[width=1\linewidth]{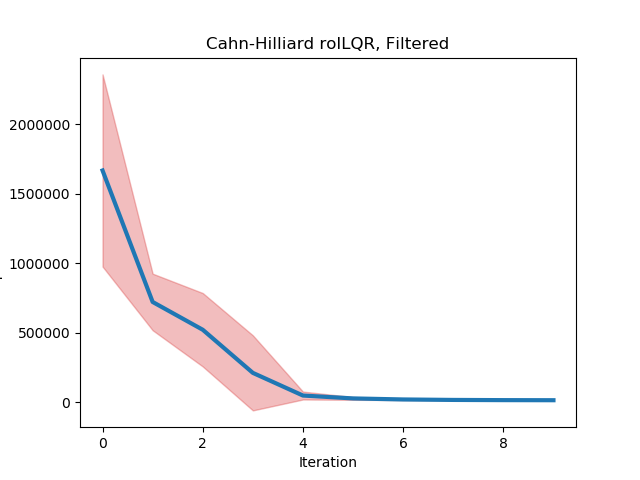}}

\end{multicols}
      
\caption{\small Convergence of Episodic cost for (a) Allen-Cahn, (b) Cahn Hilliard and (c) Viscous Burgers PDE averaged over 100 runs with different initial guesses}
\label{d2c_2_conv}
\end{figure}

\begin{figure}[!htpb]
\setlength{\belowcaptionskip}{-15pt}
\centering

\begin{multicols}{2}
   
      \subfloat[Allen-Cahn Comparison\\(GS-II)]{\includegraphics[width=1\linewidth]{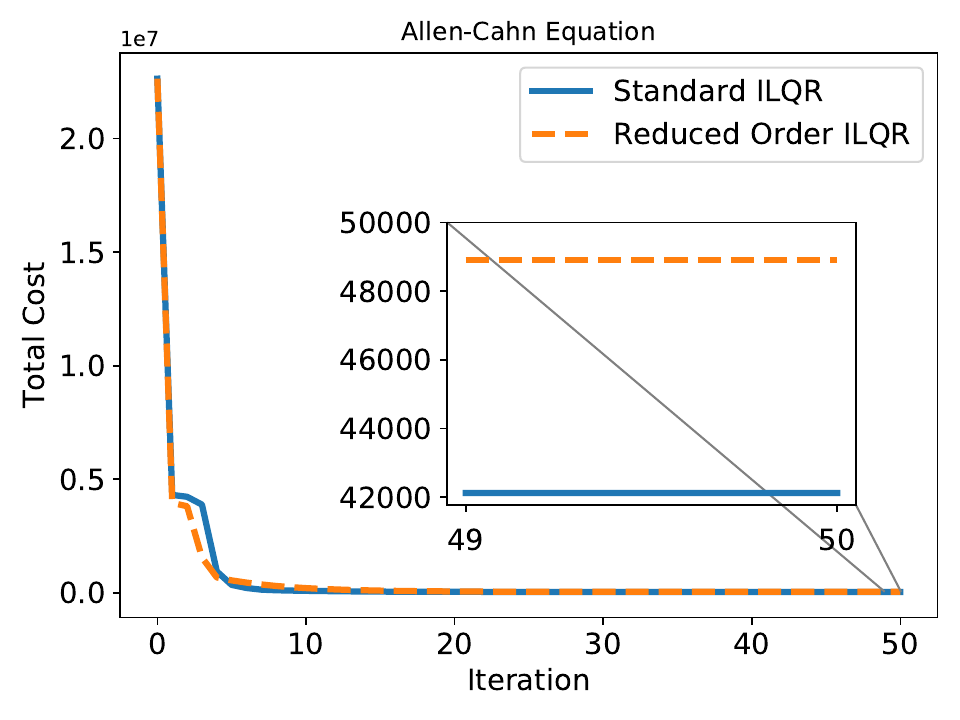}}
      \subfloat[Cahn-Hilliard Comparison\\(GS-I)]
      {\includegraphics[width=1\linewidth]{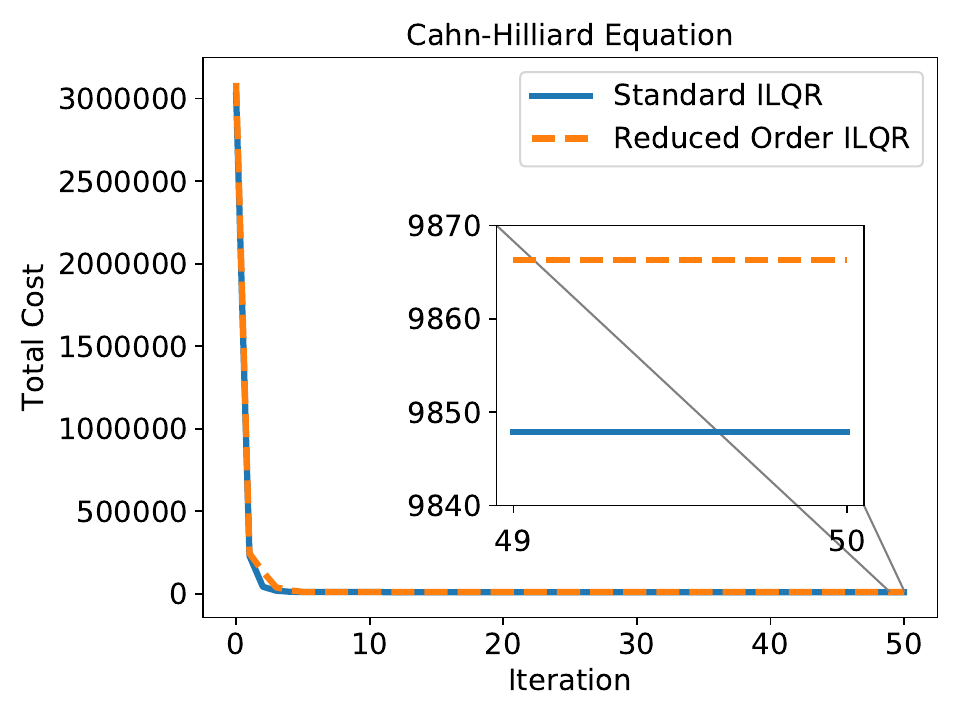}}\\

\end{multicols}

\subfloat[Burgers Comparison]{\includegraphics[width=0.5\linewidth]{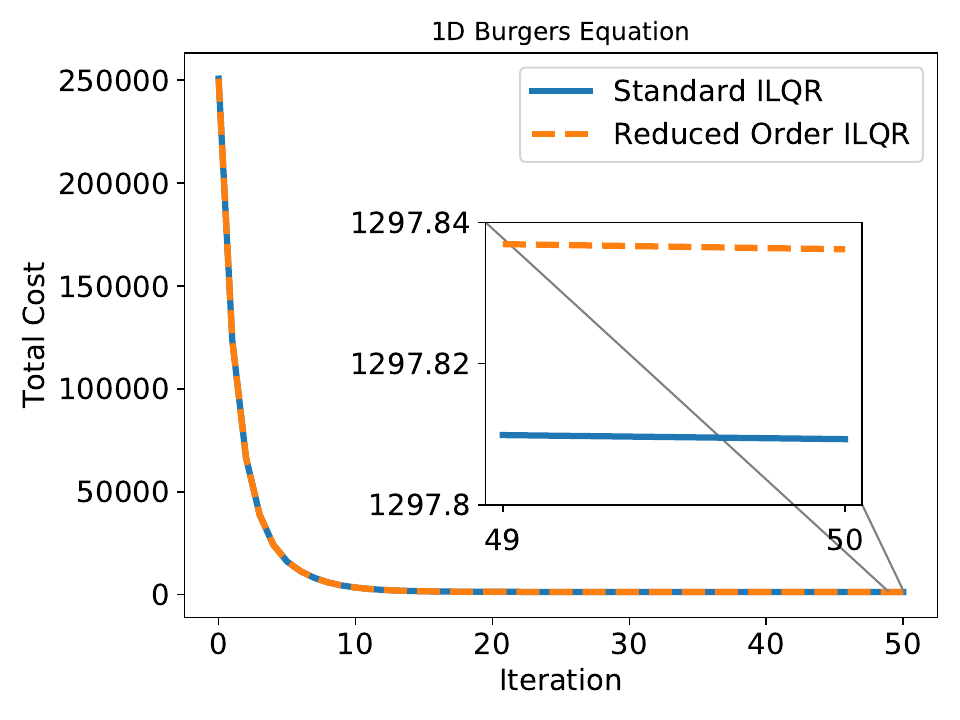}}

\caption{\small Comparing convergence of Episodic cost for (a) Allen-Cahn and (b) Viscous Burgers PDE, starting from same initial guess}
\label{d2c_2_comp}
\end{figure}


\subsubsection{Comparison With Deep RL}
In order to benchmark our algorithm, we have previously compared a data-based implementation of the standard, full order ILQR algorithm with the Deep Deterministic Policy Gradient (DDPG) algorithm, for the Allen-Cahn and Cahn-Hilliard Equations discussed in Sec.~\ref{applications} \cite{material_control_journal}. It was observed that DDPG fails to converge to the target microstructure for the scale of the problems considered here, due to the overparametrization that stems from the increasing dimensionality of the resulting system. Thus, using DDPG or similar techniques to design a feedback law is infeasible in the context of such problems. Moreover, the structure of the DDPG algorithm, or similar model-free RL techniques in general, does not allow for the implementation of model order techniques that we implement in RO-ILQR, making the comparison unfair.

\section{CONCLUSIONS}
In this article, we have proposed an efficient approach for controlling nonlinear partial differential equations by employing model order reduction in tandem with a well known nonlinear optimal control algorithm ILQR. We have shown the viability and repeatability of the approach, and have compared it with a standard ILQR without model reduction to show that the proposed approach is highly efficient without sacrificing performance. We have also theoretically characterized the quality of the solution achieved here, and the sub-optimality thereof, and empirically shown how the convergence bounds are tighter practically than the conservative estimates.
In future work, we plan to apply the methodology to larger scale models, and scale the approach to real-world systems. Another future direction of work is the implications of the POD based reduced order LTV identification introduced here for nonlinear model reduction problems.


\section{Appendix}

\subsection{Proof for Lemma 1}
\label{l1proof}

    Given a perturbed control sequence $\delta U=\{\delta u_0, \delta u_1,..., \delta u_{T-1}\}$ about the nominal, let us apply the same input to both full-order and reduced-order perturbed dynamics. Thus,
    \begin{align*}
        &\delta J(\delta U)-\delta \hat{J}(\delta U) = \sum_{t=0}^{T-1} \Bigg(\Bar{x}_t^T Q_t \delta x_t + \delta x_t^T Q_t \delta x_t \\
        &~~~~-   (\Phi \Bar{\alpha}_t)^T Q_t (\Phi \delta \alpha_t)-(\Phi \delta \alpha_t)^T Q_t (\Phi \delta \alpha_t) \Bigg) + \\
        &~~~~\Bigg(\Bar{x}_T^T Q_T \delta x_T + \delta x_T^T Q_T \delta x_T\\
        &~~~~-(\Phi \Bar{\alpha}_T)^T Q_T (\Phi \delta \alpha_T)-(\Phi \delta \alpha_T)^T Q_T (\Phi \delta \alpha_T) \Bigg) 
    \end{align*}

    Adding and subtracting $\bar{x}_t^T Q_t (\Phi \delta \alpha_t)$, $\delta x_t^T Q_t (\Phi \delta \alpha_t)$, $\bar{x}_T^T Q_T (\Phi \delta \alpha_T)$ and $\delta x_T^T Q_T (\Phi \delta \alpha_T)$, we can rewrite the difference as $\delta J(\delta U)-\delta \hat{J}(\delta U) = \sum_{t=0}^{T-1} \Bigg(\Bar{x}_t^T Q_t (\delta x_t - \Phi \delta \alpha_t) + \delta x_t^T Q_t (\delta x_t-\Phi \delta \alpha_t)-   (\Bar{x}_t - \Phi \Bar{\alpha}_t)^T Q_t (\Phi \delta \alpha_t)-(\delta x_t-\Phi \delta \alpha_t)^T Q_t (\Phi \delta \alpha_t) \Bigg) +\Bigg(\Bar{x}_T^T Q_t (\delta x_T - \Phi \delta \alpha_T) + \delta x_T^T Q_T (\delta x_T-\Phi \delta \alpha_T)-   (\Bar{x}_T - \Phi \Bar{\alpha}_T)^T Q_T (\Phi \delta \alpha_T)-
        (\delta x_T-\Phi \delta \alpha_T)^T Q_T (\Phi \delta \alpha_T) \Bigg)$. Thus, 
    \begin{align*}
        &\delta J(\delta U)-\delta \hat{J}(\delta U)\leq \sum_{t=0}^{T-1} \Bigg( (||Q_t \Bar{x}_t||+||Q_t \delta x_t||)||\delta x_t-\Phi \delta \alpha_t|| \\
        &~~~~+ ||Q_t \Phi \delta \alpha_t||(||\Bar{x}_t-\Phi \alpha_t||+||\delta x_t-\Phi \delta \alpha_t||) \Bigg) \\
        &~~~~+ \Bigg( (||Q_T \Bar{x}_T||+||Q_T \delta x_T||)||\delta x_T-\Phi \delta \alpha_T|| \\ &~~~~+ ||Q_T \Phi \delta \alpha_t||(||\Bar{x}_T-\Phi \Bar{\alpha}_T||+||\delta x_T-\Phi \delta \alpha_T||) \Bigg)\\
        &~~~~~~~~~~~~~~~~~~~~~~~\leq 7(T+1) \Bar{C}\epsilon=\Bar{C}_1 \epsilon.
    \end{align*}
    

\subsection{Proof for Lemma 2}
\label{l2proof}
    From Lemma \ref{L1}, we know that
    \begin{align*}
        &|\delta J(\delta U)-\delta \hat{J}(\delta U)|\leq \Bar{C}_1 \epsilon\\ &\implies -\Bar{C}_1 \epsilon \leq \delta J(\delta U)-\delta \hat{J}(\delta U) \leq \Bar{C}_1 \epsilon.
    \end{align*}
    Also, we know that $\delta U^*$ is a sub-optimal policy for $\delta \hat{J}(\cdot)$, while $\delta \hat{U}^*$ is the optimal. Similarly, $\delta \hat{U}^*$ is a sub-optimal policy for $\delta J(\cdot)$, while $\delta U^*$ is the optimal. Hence,
    \begin{align}
        &\delta \hat{J}(\delta \hat{U}^*)\leq\delta \hat{J}(\delta U^*) \implies \delta \hat{J}(\delta U^*)-\delta \hat{J}(\delta \hat{U}^*) \geq 0, \label{subopt-a}\\
        &\delta J(\delta U^*)\leq\delta J(\delta \hat{U}^*) \implies \delta J(\delta U^*)-\delta J(\delta \hat{U}^*) \leq 0. \label{subopt-b}
    \end{align}

    From Eqs.~\ref{subopt-a}, \ref{subopt-b} and Lemma \ref{L1}, we can write
    \begin{align*}
        &\delta J(\delta U^*)-\delta \hat{J}(\delta \hat{U}^*) = (\delta J(\delta U^*)-\delta \hat{J}(\delta U^*) )+
        \\ &~~~~~~~~~~~~~~~~~~~~~~~~~~~(\delta \hat{J}(\delta U^*)-\delta \hat{J}(\delta \hat{U}^*)) \geq -\Bar{C}_1 \epsilon, \text{ and} \\
        &\delta J(\delta U^*)-\delta \hat{J}(\delta \hat{U}^*) = (\delta J(\delta U^*)-\delta J(\delta \hat{U}^*) )+\\ &~~~~~~~~~~~~~~~~~~~~~~~~~~~(\delta J(\delta \hat{U}^*)-\delta \hat{J}(\delta \hat{U}^*)) \leq \Bar{C}_1 \epsilon.
    \end{align*}

    Thus, 
    \begin{align*}
        |\delta J(\delta U^*)-\delta \hat{J}(\delta \hat{U}^*)|\leq \Bar{C}_1 \epsilon.
    \end{align*}

\subsection{Proof for Lemma 3}
\label{l3proof}
    From Lemma \ref{L2}, 
    \begin{align}
        &|\delta J(\delta U^*)-\delta \hat{J}(\delta \hat{U}^*)|\leq \Bar{C}_1 \epsilon, \nonumber\\
        &\implies \Bar{C}_1 \epsilon \geq |\delta J(\delta U^*)-\delta J(\delta \hat{U}^*)+\delta J(\delta \hat{U}^*)-\delta \hat{J}(\delta \hat{U}^*)| \nonumber \\ &~~~~~~~~~~~~\geq |\delta J(\delta U^*)-\delta J(\delta \hat{U}^*)|-|\delta J(\delta \hat{U}^*)-\delta \hat{J}(\delta \hat{U}^*)|, \nonumber\\
        &\implies |\delta J(\delta U^*)-\delta J(\delta \hat{U}^*)| \leq 2\Bar{C}_1\epsilon. \label{eq:sameJ}
    \end{align}
    
    Let $\delta \hat{U}^* = \delta U^* + \delta U^R$. From Eq.~\ref{eq:sameJ}, 
    \begin{align*}
        |\delta J(\delta U^*)-\delta J(\delta U^* + \delta U^R)| \leq 2\Bar{C}_1\epsilon.
    \end{align*}

     Since the LTV dynamics are affine in control, $\delta J$ is quadratic in the input $\delta U$. Thus, we can write $|\delta J(\delta U^*)-\delta J(\delta U^* + \delta U^R)|=|\delta J(\delta U^*)-\delta J(\delta U^*)-\frac{\partial \delta J}{\partial \delta U}\Bigg|_{\delta U^*} \delta U^R - (\delta U^R)^T \frac{\partial^2 \delta J}{\partial (\delta U) ^2}\Bigg|_{\delta U^*} \delta U^R|$.
    Since $\delta U^*$ is the minima of $\delta J$, $\frac{\partial \delta J}{\partial \delta u}\Bigg|_{\delta u^*}=0$. Let us define the weighted-norm $||.||_Q$ as $||x||_Q = (x^T Q x)^{1/2}$. Thus, 
    
    \begin{align*}
       &|\delta J(\delta U^*)-\delta J(\delta \hat{U}^*)|=||\delta U^R||_{\frac{\partial^2 \delta J}{\partial (\delta U) ^2}|_{\delta U^*}}^2 \leq 2 \Bar{C}_1 \epsilon.
    \end{align*}
     
    Now, $||\frac{\partial^2 \delta J}{\partial (\delta U) ^2}|_{\delta U^*}|| \geq \sigma_{min}(\frac{\partial^2 \delta J}{\partial (\delta U) ^2}|_{\delta U^*})$, where $\sigma_{min}(\cdot)$ is the smallest singular value of the function. Then,

    \begin{align*}
         &||\delta U^R||^2 \leq \frac{2\Bar{C}_1 \epsilon}{\sigma_{min}(\frac{\partial^2 \delta J}{\partial (\delta U) ^2}|_{\delta U^*})} \\ 
       &\implies ||\delta U^R||=||\delta U^*-\delta \hat{U}^*||\leq \sqrt{\frac{2\Bar{C}_1 \epsilon}{\sigma_{min}(\frac{\partial^2 \delta J}{\partial (\delta U) ^2}|_{\delta U^*})}}
    \end{align*}
    From A\ref{A4}, $\sigma_{min}(\frac{\partial^2 \delta J}{\partial (\delta U) ^2}|_{\delta U^*})\geq \Bar{\sigma}$. Thus,
    
    \begin{align*}
        ||\delta U^*-\delta \hat{U}^*||\leq \sqrt{\frac{2\Bar{C}_1 \epsilon}{\sigma_{min}(\frac{\partial^2 \delta J}{\partial (\delta U) ^2}|_{\delta U^*})}}\leq \sqrt{\frac{2\Bar{C}_1 \epsilon}{\Bar{\sigma}}}=\delta 
    \end{align*}

\subsection{Proof for Lemma 4}
\label{l4proof}
    Let the cost at some iteration $k$ be $J(U^k)$. From the line-search algorithm, we know that
\begin{align*}
    \frac{J(U^{k+1})}{\Delta J(\eta)}\geq \rho >0,
\end{align*}
where $\Delta J(\eta)=-\eta S_t^T \hat{S}_t$. 
Since the LQR optimization proceeds as a Newton method, the descent step for the \ref{eq:FO-LQR} can be written as 
\begin{align*}
    S_t = -(\nabla^2 J)^{-1} \nabla J.
\end{align*}

Since the model order reduction in the \ref{eq:RO-LQR} introduces errors in the gradient, the corresponding descent step is 
\begin{align*}
    \hat{S}_t = -(\nabla^2 \hat{J})^{-1}\nabla \hat{J} = -[(\nabla^2 J)^{-1} \nabla J + w],
\end{align*}
where $||w||\leq \delta$ due to Lemma \ref{L3}.

For the development below, let us write $J(U^k)$ as $J^k$ for ease of notation. Then,
\begin{align*}
    J^{k+1}-J^k &\leq -\eta \rho ([\nabla^2 J]^{-1})\nabla J)^T ([\nabla^2 J]^{-1} \nabla J + w) \\
    &\leq -\eta \rho ||[\nabla^2 J]^{-1}\nabla J||(||[\nabla^2 J]^{-1}\nabla J||-||w||).
\end{align*}

Now, we know that $\eta>0$, $\rho>0$ and $||[\nabla^2 J]^{-1}\nabla J||>0$ for all points besides the stationary point. Then,
\begin{equation}
    J^{k+1}-J^k \leq - \beta(||[\nabla^2 J]^{-1}\nabla J||-||w||). \label{eq:monotonic}
\end{equation}
For the algorithm to descend, $J^{k+1}-J^k <0$. Thus, 

\begin{equation*}
    ||[\nabla^2 J]^{-1}\nabla J||>||w||.
\end{equation*}
Thus, $\Hat{S}_t$ is guaranteed to descend if
\begin{equation*}
    ||[\nabla^2 J]^{-1}\nabla J||>\delta.
\end{equation*}





\bibliographystyle{IEEEtran}
\bibliography{IEEEabrv,ICRA_refs,TAC_refs}




\end{document}